\keywords{Random Spanning Trees, k-edge connectivity, Survivable Network Design Problem, Graph Theory}
\newtheorem{fact}{Fact} 
\newcommand{\EE}[1]{\mathbbm{E}\left[#1\right]}
\newcommand{\VAR}[1]{\mathrm{var}\left[#1\right]}
\newcommand{\COV}[1]{\mathrm{Cov}\left[#1\right]}
\begin{document}

\title{Using random spanning trees in survivable networks design.}

\author{Dominik Bojko}
\author{B{\l}a{\. z}ej Wr{\' o}bel}
\affiliation{
  \institution{Wroc{\l}aw University of Science and Technology}
  \country{Poland}
}
\email{dominik.bojko@pwr.edu.pl}
\email{blazej.wrobel@pwr.edu.pl}

\begin{abstract}
    We investigate a process of joining $k$ random spanning trees on a fixed clique $K_n$.
    The joined trees may not be disjoint and multiple edges are replaced by one simple edge.
    This process produces a simple graph $G$ on $n$~vertices with an edge set, which is a union of edge sets of the joined trees.
    We study a random variable $S_{k}$ of the number of edges in the generated graph $G$.
    The exact formula is derived for the expected value of the random variable $S_{k}$.
    In addition, an upper bound on the concentration coefficient of the random variable $S_{k}$ is provided.
    We use results of our analysis to design an algorithm to generate $k$-edge connected graphs for arbitrarily large values of $k \geq 2$.
    The designed algorithm solves a particular case of the Survivable Network Design Problem, where the cost of each edge is $c_{e} = 1$ and the connectivity requirement for each pair of vertices $u, v \in V(G)$ is $k$.
    The proposed algorithm is within a factor strictly less than $2$ of the optimal value (i.e., the number of edges in the generated graph) and its running time is $O(kn\log{n})$.
\end{abstract}

\maketitle 

\section{Introduction}
\label{section:Introduction}
Simple graphs are natural models for various networks, e.g. electrical, road, or optical networks connecting various cities or countries.
Depending on the structure of a given network, it can be more or less immune to link failures.
A given network is reliable if it can serve its purpose even after a failure of some links which can happen naturally (e.g. harsh weather conditions, maintenance work, etc.) or are an act of sabotage.
For example, one should be able to ride from city $A$ to city $B$ even if some roads in the road network are closed due to maintenance or accidents.
Thus, designing reliable networks is an important problem from a practical point of view cf.~\cite{Zhenrong2004, Luss2004, Guolung2005, Panigrahi2011}.
\par

If we represent a given network as a simple graph, then link failures may be modeled as a removal of some subsets of edges.
The edge connectivity of the graph is the minimum number of edges that must be removed to disconnect it.
Thus, edge connectivity can be used to model the reliability of a given network.
Formally, a connected simple graph is $k$-edge connected, if $k$ is the minimal number of edges of the graph that have to be removed in order to lose connectivity of the graph.
From this point of view, the number $k$ can be treated as a level of reliability, which specifies how much the given network is immune to potential link failures.
The detailed literature on edge and vertex connectivity of graphs can be found in~\cite{Diestel2017,Bondy1976}~.
In this paper, we do not consider vertex connectivity. \par

Our research is inspired by the methods of construction of $2$ and $3$-edge connected graphs from~\cite{Diestel2017}.
For instance, every $2$-edge connected graph on $n$ vertices can be constructed by a process of adding paths between any two vertices, starting from a cycle graph.
Similarly, every $3$-edge connected graph can be constructed from $K_{4}$ clique by adding new edge $e$ between: two existing vertices, two subdividing vertices on different edges of the constructed graph or between one vertex from $K_{4}$ and one subdividing vertex.
The aforementioned methods are applicable only for specific values of connectivity parameter $k$.
This raises the question of whether a general construction exists for $k$-edge-connected graphs, regardless of the value of $k$.
The answer is positive and such general construction methods were proposed in~\cite{Su2009,Habib1980,Zhang89}, however they involve subroutines for finding $k - 1$ element cliques in graphs or solving instances of graph isomorphism problem, which makes these constructions rather difficult to use in practice.
\par

In contrast to construction methods from~\cite{Habib1980, Su2009, Zhang89}, we propose a simpler approach in which we use $k$ random spanning trees to construct $k$-edge connected graphs.
More specifically, we analyze the process of joining $k$ random spanning trees on $n$ vertices, resulting in a simple graph.
We select trees uniformly at random from the set of all spanning trees over $n$ vertices (see e.g.~\cite{Broder1989, Aldous1990} for such the randomization methods).
The edge set of the result graph is the union of the edge sets of selected trees.
\par

This process is suitable for constructing $k$-edge connected graphs. A main idea is as follows: for each pair $u,v\in V$ and a spanning tree, there is exactly one unique path between $u$ and $v$.
If two edge-disjoint random spanning trees (on the same set of vertices) are joined, then there are two edge-disjoint paths between each pair of vertices and Global Version of Menger's Theorem (cf. Appendix~\ref{appendix:mengers_theorem_statement}) implies that the constructed graph is $2$-edge connected.
Similarly, if $k$ edge-disjoint spanning trees are joined, then the obtained graph will be $k$-edge connected.
The assumption of edge-disjointness of the spanning trees is essential. Indeed, if all the trees are exactly the same, then the joint graph is still $1$ edge-connected.
\par

In practice, we are also concerned with the costs associated with links in the network.
Therefore, our aim is to design networks that are not only reliable, but also affordable.
Similar problem was studied in~\cite{Gabow1998,Goemans1994} and is known as the Survivable Network Design Problem (SNDP).
In discussed problem, the graph $G = (V, E)$ is given with the cost assigned to each edge $e \in E$.
The goal is to find minimum cost connected subgraph $H \subseteq G$ in which for each pair of vertices $u, v \in V$ their connectivity requirement is satisfied.
It is easy to see, that algorithms which solve instances of SNDP problem can be used to design networks which are both reliable and affordable.
\par

The discussed problem can be formulated as a particular instance of SNDP.
In SNDP we have an undirected graph $G = (V, E)$, a non-negative cost $c_{e}$ for every edge $e$, and a non-negative connectivity requirement $r_{uv}$ for every pair of vertices $u, v \in V$.
The aim is to find a minimum-cost connected subgraph in which each pair of vertices is joined by at least $r_{uv}$ edge-disjoint paths.
It is not hard to see that the proposed construction approach solves SNDP for the particular case when $G = K_{n}$, $c_{e} = 1$ for every $e \in E(K_{n})$ and $r_{uv} = k$ for every $u, v \in V$.
\par

\section{Related work}
As noted earlier, constructions for general values of $k$ were proposed in, e.g.,~\cite{Habib1980, Su2009, Zhang89}.
The problem with the methods mentioned is that they either test graph isomorphisms~\cite{Habib1980,Zhang89} or they use subroutines that search for cliques and check edge connectivity of intermediate graphs~\cite{Su2009}.
This may make these constructions rather difficult to implement and use in practice.
\par  

The method of generating a $k$-edge-connected graph by joining $k$ spanning trees of a graph is a direct application of the tree packing problem.
In tree packing problem we aim to find the maximum number of edge-disjoint spanning trees that can exist in a given graph.
In~\cite{Gabow95} the tree packing problem is used to design an algorithm for finding
edge connectivity parameter of a graph.
The input graph may be directed or undirected. The time complexity of discussed algorithm is $O(\lambda m \log(n^2/m))$ for directed graphs and $O(m + \lambda^2 n \log(n/\lambda))$ for undirected graphs, where $n = |V(G)|$, $m = |E(G)|$ and $\lambda$ is the edge connectivity parameter of the graph $G$.
\par

Randomized methods were used to solve problems related to cuts and network design.
In~\cite{Karger96} randomized algorithm for finding minimum cut in weighted, undirected and connected graphs was presented. Its time complexity is $O(n^{2}\log^{3}(n))$.
The discussed approach is also capable of finding all minimum cuts in a given graph.
In subsequent paper~\cite{Karger2000} the time bound was improved to $O(m\log^{3}(n))$ and it was also proven that all minimum cuts can be found in $O(n^{2}\log(n))$.
\par

Beyond min-cuts, random sampling theorems have led to faster algorithms for various network design problems~\cite{Karger1994}.
The technique, when combined with randomized rounding, was also used to achieve a near-optimal $(1+o(1))$-approximation for the NP-complete minimum $k$-connected subgraph problem when $k \gg \log n$.
\par

In~\cite{Gabow1998} the approximation algorithm for SNDP was presented. Its approximation factor is $2k - 1$ for $k \geq 2$ and $2$ for $k = 1$, where $k$ is the maximum connectivity requirement between any pair of nodes in a given problem instance.
Later, the approximation factor was improved to $2\mathcal{H}(k)$, where $\mathcal{H}(k) \approx \ln(k)$ is the $k-th$ harmonic number~\cite{Goemans1994}.
In~\cite{Jain2001} the dependence on $k$ was removed and $2$-approximation algorithm was presented.
The solution involved linear relaxation of the integer programming formulation of this problem and then rounding of derived solutions.
\par

In this paper, we present an alternative construction of $k$-edge connected graphs on $n$ vertices.
As stated in Section~\ref{section:Introduction}, we use the process of joining $k$ random spanning trees of clique $K_{n}$.
Multiple occurrences of the same edge are treated as one occurrence, therefore the result is simple graph.
This method can be easily used in practice, because each operation can be performed in polynomial time.
For generating random spanning trees we use approaches proposed in ~\cite{Wilson1996, Broder1989, Aldous1990}.
Their complexity is $O(n\log(n))$ on clique $K_{n}$ therefore, the overall complexity of $k$ trees selection is $O(kn\log(n))$.
\par  

It can be shown that this method may be used to solve particular instances of SNDP problem on clique $K_{n}$ ($n \geq 3$), where connectivity requirement for each pair of nodes $u, v \in V$ is equal $k$ and weight of edge $e \in E(K_{n})$ is equal $1$ (this means that the number of edges is the objective function to minimize).
Since each spanning tree has $n - 1$ edges, the result graph has at most $k(n - 1)$ edges.
In~section~\ref{section:finish} it is shown that, this algorithm is $2$-approximation for these particular instances of SNDP problem.
Note, that in the proposed approach, the dependency on connectivity $k$ requirement is also removed as in~\cite{Jain2001}.
The constraints used in linear programming model from~\cite{Jain2001} are defined for each proper subset $S \subset V$ and that means the number of such constraints may be exponential.
Our approach does not involve any integer or linear programming formulation, so it bypasses this flaw.
\par

\section{Selected properties of random spanning trees}
\label{section:expected_degree_of_fixed_vertex}

We consider labeled graphs and trees on the vertex set $V = [n] :=\{1, 2, \ldots, n \}$, i.e. each vertex is distinctly labeled by one number from $\{1, 2, \ldots, n \}$.
In this paper, we consider only uniform distribution over the space of all spanning trees (later referred to as ST) of a clique $K_{n}$.
More specifically, when we say that a tree is random, then we mean that this tree was selected with uniform distribution.
We adopt the terminology from~\cite{Goyal2009} and throughout the publication we will refer to result simple graphs as $k$-splicers.
In addition we extend the terminology and we will refer to result multigraphs, when repeating edges are included, as multi $k$-splicers.
\par
A Pr\"ufer code is a sequence of length $n - 2$, consisting of numbers from $[n]$.
There exists a natural bijection between labeled trees on $[n]$ and Pr\"ufer sequences of length $n-2$.
Thus, by drawing sequences of length $n - 2$ from the set $[n]$ uniformly at random (each label from $[n]$ is selected separately and independently), we also obtain a uniform distribution over the set of all labeled spanning trees on $[n]$. In all of the results and facts in this section, we assume that $n \geq 2$ and $1 \leq d \leq n - 1$. We will refer to set $[n]$ and clique $K_{n}$ interchangeably. The degree of a vertex $v$ in a graph $G$ will be denoted by $d(v)$. The following property of Pr\"ufer codes will be useful in the proof of Theorem~\ref{lemma:TreesWithoutEdges}.
\begin{fact}
\label{fact:vertex_degree_in_tree}
    Let $v \in V(K_{n})$ and $T$ be a ST of $K_{n}$.
    The degree of $v$ in $T$ is one more than the number of occurrences of its label label in the Pr\"ufer code of $T$.
\end{fact}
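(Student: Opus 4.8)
The plan is to recall the standard Pr\"ufer encoding algorithm and then track, for a fixed vertex $v$, exactly when its label is written into the code. Recall that the Pr\"ufer code of a labeled tree $T$ on $[n]$ is built iteratively: as long as more than two vertices remain, one locates the leaf with the smallest label, appends the label of its unique neighbour to the code, and deletes that leaf; the process terminates after $n-2$ deletions, when a single edge remains. I would first isolate the structural observation that drives the whole argument: the label $v$ is appended to the code precisely at those steps in which some neighbour of $v$ is deleted as the current smallest leaf, and at no other steps.

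Next I would justify this observation by making explicit the invariant that the edge $uv$ is present exactly as long as both $u$ and $v$ are alive. If a neighbour $u$ of $v$ is deleted before $v$ itself, then at the moment of deletion both endpoints of $uv$ are still present, so $v$ is one of $u$'s current neighbours; since $u$ is deleted only when it is a leaf, $v$ must be its unique remaining neighbour, and thus $v$ is the label written. Conversely, the only way $v$'s label can be written is by deleting a leaf whose unique neighbour is $v$, i.e.\ by deleting a neighbour of $v$. Hence the number of occurrences of $v$ in the code equals the number of neighbours of $v$ deleted strictly before $v$ is (or before the process halts, if $v$ is never deleted).

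I would then close the count with a short case split. If $v$ is deleted at some step, then at that step $v$ is a leaf and therefore has exactly one surviving neighbour, so exactly $d(v)-1$ of its neighbours were deleted earlier; once $v$ is gone its label can never be written again, so it occurs exactly $d(v)-1$ times. If instead $v$ survives to the end, the terminal configuration is the single edge joining the last two vertices, so again exactly one neighbour of $v$ survives while the other $d(v)-1$ neighbours were deleted during the run, each contributing one occurrence of $v$. In either case $v$ occurs $d(v)-1$ times, which is equivalent to the claimed identity $d(v)=(\text{occurrences of }v)+1$.

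The main obstacle is the structural observation in the second step, specifically verifying that whenever a neighbour of $v$ is removed while $v$ is still alive it is $v$, and not some other vertex, that gets recorded; this is exactly what the edge-persistence invariant delivers, so I would state it carefully rather than treat it as obvious. An alternative, fully equivalent route is induction on $n$: delete the smallest leaf $\ell$ (whose neighbour $u$ is the first code symbol), apply the inductive hypothesis to the resulting tree on $n-1$ vertices, and check that passing from the truncated tree back to $T$ raises both $d(u)$ and the occurrence count of $u$ by exactly one while leaving every other vertex's degree and count unchanged. That bookkeeping is routine, so I would favour the direct counting argument above.
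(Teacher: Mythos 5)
Your proof is correct: the edge-persistence invariant, the equivalence \say{$v$ is written at a step iff a neighbour of $v$ is deleted while $v$ is alive}, and the two-case count (whether $v$ is eventually deleted or survives into the final edge) together give exactly the classical argument for this property of Pr\"ufer codes. The paper states this Fact without proof, treating it as a known property of the Pr\"ufer bijection, so there is no competing argument to compare against; your write-up is the standard proof and correctly supplies the justification the paper leaves implicit.
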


From Fact~\ref{fact:vertex_degree_in_tree} we immediately get the following lemma, which will also be useful in the proof of Theorem~\ref{lemma:TreesWithoutEdges}.
\begin{lemma}
\label{lemma:vertex_degree_in_tree}
    Let $v$ be a fixed vertex from the set $[n]$. The number of STs in which $d(v) = d$ is,
    \[
        \binom{n-2}{d-1} \cdot (n-1)^{n-1-d}.
    \]
\end{lemma}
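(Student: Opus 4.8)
The plan is to translate the degree condition on $v$ into a counting condition on Pr\"ufer codes, then count those codes directly, and finally transfer the count back to spanning trees via the Pr\"ufer bijection. Since the bijection between labeled spanning trees of $K_n$ and sequences of length $n-2$ over the alphabet $[n]$ is a genuine bijection, the number of STs satisfying any property expressible in terms of the code equals the number of codes satisfying the corresponding property, so it suffices to count sequences.

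First I would invoke Fact~\ref{fact:vertex_degree_in_tree}: in any ST $T$, the degree $d(v)$ equals one more than the number of occurrences of the label $v$ in the Pr\"ufer code of $T$. Hence the event $d(v) = d$ is equivalent to the label $v$ appearing exactly $d-1$ times in the length-$(n-2)$ code. Note that this already makes the stated range $1 \le d \le n-1$ natural, since the number of occurrences ranges over $\{0, 1, \ldots, n-2\}$.

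Next I would count the Pr\"ufer sequences of length $n-2$ in which $v$ occurs exactly $d-1$ times. I would choose the set of positions holding the label $v$, which can be done in $\binom{n-2}{d-1}$ ways, and then fill each of the remaining $n-2-(d-1) = n-1-d$ positions with any of the other $n-1$ labels from $[n]\setminus\{v\}$, independently, giving $(n-1)^{n-1-d}$ choices. Multiplying yields
\[
    \binom{n-2}{d-1} \cdot (n-1)^{n-1-d}.
\]
Applying the Pr\"ufer bijection then identifies this quantity with the number of spanning trees in which $d(v)=d$, completing the argument.

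I do not expect a genuine obstacle here, as the proof is a direct counting argument; the only points requiring a little care are confirming the index bookkeeping (that exactly $d-1$ occurrences correspond to degree $d$, and that $n-1-d \ge 0$ over the admissible range) and emphasizing that the Pr\"ufer map is a bijection so that counting codes is equivalent to counting trees. These are precisely the facts supplied just before the statement, so the lemma follows immediately once the translation is made.
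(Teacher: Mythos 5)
Your proof is correct and matches the paper's intended argument exactly: the paper derives this lemma immediately from Fact~\ref{fact:vertex_degree_in_tree} by the same translation to Pr\"ufer codes (choosing the $d-1$ positions for the label $v$ in $\binom{n-2}{d-1}$ ways and filling the remaining $n-1-d$ positions with the other $n-1$ labels). Your added care about the bijection and the index range $1 \le d \le n-1$ is sound and consistent with the paper's setup.
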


Let us denote by $E(G)$, the set of edges of a graph $G$. 
The indicator random variable $X_e^G$ will be defined as follows:
\begin{equation}
\label{equation:indicator}
X^{G}_{e} = \begin{cases}
    1, & \textit{if $e \in E(G)$.} \\
    0, & \textit{otherwise.}
\end{cases}
\end{equation}

Using this notation, we will answer the question about the probability that a given edge $e = \{u, v\}$ is an edge in a random spanning tree $T$ chosen with uniform distribution.
\begin{lemma} Let $u, v \in [n]$ be different fixed vertices and $T$ be a random spanning tree on the set~$[n]$.
Let $X^{T}_{e}$ be the random indicator variable given by formula (\ref{equation:indicator}), where $e=\{u,v\}$.
Then
\[
\EE{X_{e}^T}= \Pr(X^{T}_{e}=1)=\frac{2}{n}~.
\]
\label{lemma:edge-probability}
\end{lemma}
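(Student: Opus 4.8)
The plan is to compute the probability directly as a ratio of counts: the number of spanning trees containing the fixed edge $e=\{u,v\}$, divided by the total number of spanning trees on $[n]$. By Cayley's formula the denominator is $n^{n-2}$, so the whole task reduces to counting spanning trees of $K_n$ that include the specific edge $\{u,v\}$. By symmetry of the clique under relabeling, every edge is contained in the same number of spanning trees, so I could also argue via a global counting/symmetry shortcut; but the cleanest route is to count directly.

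First I would count spanning trees containing $e=\{u,v\}$ using the contraction idea: a spanning tree of $K_n$ that uses the edge $\{u,v\}$ corresponds, after contracting $u$ and $v$ into a single vertex, to a spanning tree on the resulting $(n-1)$-vertex structure. The contracted object is essentially $K_{n-1}$ (with $u,v$ merged), and one must be careful that the merge introduces parallel edges that are counted with the right multiplicity. A more transparent alternative, which I would prefer to write up, is a symmetry-plus-double-counting argument: sum $\sum_e X_e^T$ over all edges $e$ of a fixed tree $T$ equals the number of edges of $T$, namely $n-1$. Taking expectations over a uniform random $T$ and using linearity gives $\sum_{e \in E(K_n)} \Pr(X_e^T=1) = n-1$. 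Since $K_n$ has $\binom{n}{2}$ edges and, by the vertex-transitivity and edge-transitivity of the clique, each edge has the same inclusion probability $p$, we get
\[
\binom{n}{2}\, p = n-1,
\]
which immediately yields
\[
p = \frac{n-1}{\binom{n}{2}} = \frac{n-1}{\tfrac{n(n-1)}{2}} = \frac{2}{n}.
\]

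The main obstacle, and the only point requiring care, is justifying that all edges share the same inclusion probability. This follows because the automorphism group of $K_n$ acts transitively on edges and the uniform distribution on spanning trees is invariant under graph automorphisms (any automorphism permutes spanning trees bijectively), so $\Pr(X_e^T=1)$ is independent of the choice of $e$. Alternatively, if one wants an explicit count rather than invoking symmetry, I would establish the formula via the Pr\"ufer-code machinery already developed in Fact~\ref{fact:vertex_degree_in_tree} and Lemma~\ref{lemma:vertex_degree_in_tree}, or via a direct application of Cayley's formula to the contracted graph; both give the same numerator $2n^{n-3}$, and dividing by $n^{n-2}$ again returns $2/n$. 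I expect the symmetry argument to be the shortest and most self-contained, so I would lead with it and mention the contraction count as a verification.
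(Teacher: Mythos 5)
Your proposal is correct, but it takes a genuinely different route from the paper. The paper proves the lemma by direct enumeration of Pr\"ufer codes: taking $e=\{n-1,n\}$ without loss of generality, it counts $2n^{n-3}$ codes corresponding to trees containing $e$ and divides by $n^{n-2}$. You instead lead with a symmetry-plus-double-counting argument: $\sum_{e\in E(K_n)}\Pr(X_e^T=1)=\EE{|E(T)|}=n-1$ by linearity, and since the automorphism group of $K_n$ acts transitively on edges while the uniform measure on spanning trees is automorphism-invariant, all $\binom{n}{2}$ inclusion probabilities are equal, giving $p=(n-1)/\binom{n}{2}=2/n$. Your justification of the equal-probability step is exactly right, and your argument is shorter, avoids Cayley/Pr\"ufer machinery entirely, and generalizes verbatim to any edge-transitive graph. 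What the paper's enumeration buys, by contrast, is a rehearsal of the contraction-and-Pr\"ufer technique that is genuinely needed in Theorem~\ref{lemma:TreesWithoutEdges}: there the pairs of edges split into two orbits (adjacent and non-adjacent, with probabilities $3/n^2$ and $4/n^2$), and a double-counting identity such as $\sum_{e\neq w}\Pr(X_e^T X_w^T=1)=(n-1)(n-2)$ yields only one equation in two unknowns, so symmetry alone cannot determine the joint probabilities. In short, your method is the cleaner proof of this lemma in isolation; the paper's method is the one that scales to the second-moment computations that follow. Your closing remark that the contraction/Pr\"ufer count gives the same numerator $2n^{n-3}$ matches the paper's actual proof, so you have effectively reproduced it as your fallback verification.
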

\begin{proof}
Without a loss of generality, we may assume that $u=n-1$ and $v=n$.
Instead of ST $T$, we will consider its Pr\"{u}fer code. Assume that $e\in E(T)$. Note that the last label of the sequence is~either $n-1$ or $n$ (because $e$ connects these two nodes).
Other $n-3$ labels are arbitrary, so there are $2n^{n-3}$ such the Pr\"{u}fer codes.
Therefore,
$
\Pr(X^{T}_{e} = 1) = \frac{2n^{n - 3}}{n^{n - 2}} = \frac{2}{n}~.
$
\end{proof}

Now, we will prove a more general result regarding existence of two chosen edges $e$ and $w$ in RST. This result will be useful in the proof of Lemma~\ref{lemma:covariance_of_m}.

\begin{theorem} \label{lemma:TreesWithoutEdges}
Consider a random spanning tree $T$ on $n$ vertices and with edge set $E(T)$ and let $e$ and $w$ be any two different edges in $K_n$. It follows that
\begin{itemize}
\item $\Pr(X^{T}_{e} \cdot X^{T}_{w} = 1) = \frac{4}{n^{2}}$, if $e$ and $w$ are not adjacent.
\item $\Pr(X^{T}_{e} \cdot X^{T}_{w} = 1) = \frac{3}{n^{2}}$, if $e$ and $w$ are adjacent.
\end{itemize}
\end{theorem}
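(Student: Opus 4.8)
The plan is to turn the statement into a counting problem: since $T$ is uniform over the $n^{n-2}$ spanning trees of $K_n$ and $X^{T}_{e}\cdot X^{T}_{w}=1$ exactly when both $e$ and $w$ lie in $E(T)$, I would write $\Pr(X^{T}_{e}\cdot X^{T}_{w}=1)=N(e,w)/n^{n-2}$, where $N(e,w)$ counts the spanning trees containing both edges. The single unifying tool I would reach for is the count of spanning trees of $K_n$ containing a prescribed spanning forest: if that forest has connected components of sizes $a_1,\dots,a_c$, the number of spanning trees extending it is $n^{\,c-2}\prod_{i=1}^{c}a_i$. As a sanity check this reproduces Lemma~\ref{lemma:edge-probability}: a single edge is a forest with one size-$2$ component and $n-2$ isolated vertices, so $c=n-1$ and the count is $2\,n^{n-3}$, i.e.\ probability $2/n$.

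Applying this to $F=\{e,w\}$ splits precisely along the two cases of the theorem. If $e$ and $w$ share no endpoint, $F$ consists of two size-$2$ components and $n-4$ isolated vertices, so $c=n-2$ and $\prod_i a_i=2\cdot 2=4$, giving $N(e,w)=4\,n^{n-4}$ and probability $4/n^2$; note this equals $\Pr(e\in T)\,\Pr(w\in T)$, a reassuring factorization. If $e$ and $w$ are adjacent, $F$ is a three-vertex path plus $n-3$ isolated vertices, so again $c=n-2$ but now $\prod_i a_i=3$, giving $N(e,w)=3\,n^{n-4}$ and probability $3/n^2$.

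Because the paper's machinery is Pr\"ufer-based, I would prefer to derive these counts from the earlier facts rather than cite the forest formula as a black box, and this is exactly where Fact~\ref{fact:vertex_degree_in_tree} and Lemma~\ref{lemma:vertex_degree_in_tree} enter. For the adjacent case, write $e=\{u,v\}$, $w=\{v,z\}$ and condition on the degree $d(v)$ of the shared vertex. Since every permutation of $[n]\setminus\{v\}$ is an automorphism of $K_n$ preserving the uniform measure, given $d(v)=d$ the neighborhood of $v$ is uniform among the $d$-subsets of $[n]\setminus\{v\}$, so the conditional probability that both $u$ and $z$ are neighbors of $v$ is $\binom{n-3}{d-2}/\binom{n-1}{d}=d(d-1)/\big((n-1)(n-2)\big)$. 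Feeding in the exact tree count of Lemma~\ref{lemma:vertex_degree_in_tree} collapses the answer to a factorial moment of the degree,
\[
\Pr(X^{T}_{e}\cdot X^{T}_{w}=1)=\sum_{d=2}^{n-1}\frac{\binom{n-2}{d-1}(n-1)^{\,n-1-d}}{n^{n-2}}\cdot\frac{d(d-1)}{(n-1)(n-2)}=\frac{\EE{d(v)\,(d(v)-1)}}{(n-1)(n-2)},
\]
and since $d(v)-1$ is the number of occurrences of $v$ in a uniform Pr\"ufer code, hence $\mathrm{Binomial}(n-2,1/n)$, the second factorial moment evaluates to $3(n-1)(n-2)/n^2$, yielding $3/n^2$.

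The non-adjacent case is the one without a clean shortcut, and I expect it to be the main obstacle: there is no shared vertex to condition on, so the single-coordinate Pr\"ufer trick of Lemma~\ref{lemma:edge-probability} no longer pins down both edges at once. I would handle it either by proving the two-edge forest count $N(e,w)=4\,n^{n-4}$ directly through the Pr\"ufer bijection --- tracking how the two disjoint edges constrain the decoding --- or by conditioning successively on two vertex degrees and their neighborhoods while accounting for the dependence between them. In both routes the conceptual content is small and the work is bookkeeping; the delicate points are justifying the exact uniformity of the conditional neighborhood and carrying out the binomial/Pr\"ufer accounting for two simultaneously prescribed edges.
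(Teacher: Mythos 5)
Your argument is correct, and it reaches the result by a genuinely different route than the paper, which is worth comparing. The paper proves from scratch exactly the two instances of the forest-extension count you invoke: it contracts $e$ and $w$ in $K_n$ and counts the lifts of spanning trees of the contracted graph via Pr\"ufer codes, obtaining $\sum_{i,j}\binom{n-4}{i-1,j-1,n-2-i-j}(n-4)^{n-2-i-j}2^{i}2^{j}=4n^{n-4}$ in the non-adjacent case (each contracted-tree edge at a merged vertex lifts in $2$ ways) and $\sum_{i}\binom{n-4}{i}3^{i+1}(n-3)^{n-4-i}=3n^{n-4}$ in the adjacent case (each edge at the single merged vertex lifts in $3$ ways). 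Since the formula $n^{c-2}\prod_{i}a_{i}$ you cite is a classical generalization of Cayley's theorem and you state and apply it correctly (component sizes $2,2$ and $n-4$ singletons, versus $3$ and $n-3$ singletons), your main argument is complete modulo that citation, and it buys generality: it yields $\Pr(F\subseteq E(T))$ for an arbitrary forest $F$ at no extra cost, along with the observation that indicators of non-adjacent edges are exactly uncorrelated. Your self-contained treatment of the adjacent case is also different from the paper's and checks out: exchangeability of $[n]\setminus\{v\}$ does justify the uniform conditional neighborhood given $d(v)=d$, the conditional probability $\binom{n-3}{d-2}/\binom{n-1}{d}=d(d-1)/\bigl((n-1)(n-2)\bigr)$ is right, and since $d(v)-1\sim\mathrm{Binomial}(n-2,1/n)$ by Fact~\ref{fact:vertex_degree_in_tree}, one gets $\EE{d(v)(d(v)-1)}=3(n-1)(n-2)/n^{2}$, hence $3/n^{2}$; this leans only on Fact~\ref{fact:vertex_degree_in_tree} and Lemma~\ref{lemma:vertex_degree_in_tree} and is arguably cleaner than the paper's computation. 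The one place you stop short --- a from-first-principles derivation of $N(e,w)=4n^{n-4}$ for disjoint $e,w$ --- is precisely where the paper's contraction trick does the work: merging each edge's endpoints and noting that a contracted tree lifts in $2^{d(v_e)}2^{d(v_w)}$ ways (the factor $4$ for a lifted $v_e v_w$ edge being absorbed automatically) turns your anticipated ``Pr\"ufer bookkeeping'' into the multinomial identity above, so the gap you flagged is real but routine, and the paper's proof is in effect an instantiation of the sketch you proposed.
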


\begin{figure}[ht!]
    \centering
    \begin{subfigure}[t]{0.45\textwidth}
        \centering
        \includegraphics[width=0.5\linewidth, keepaspectratio]{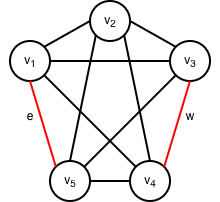} 
        \caption{}
        \label{fig:graph1-non-adjacent}
    \end{subfigure}
    \hfill
    \begin{subfigure}[t]{0.45\textwidth} 
        \centering
        \includegraphics[width=0.5\linewidth, keepaspectratio]{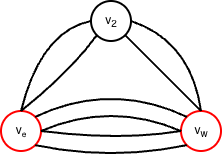} 
        \caption{}
        \label{fig:graph2-non-adjacent}
    \end{subfigure}
    
    \vspace{0.5cm}

    \begin{subfigure}[t]{0.45\textwidth} 
        \centering
        \includegraphics[width=0.5\linewidth, keepaspectratio]{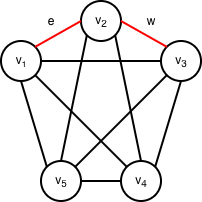} 
        \caption{}
        \label{fig:graph1-adjacent}
    \end{subfigure}
    \hfill
    \begin{subfigure}[t]{0.45\textwidth} 
        \centering
        \includegraphics[width=0.5\linewidth, keepaspectratio]{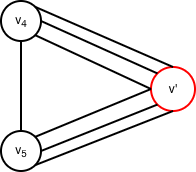} 
        \caption{}
        \label{fig:graph2-adjacent}
    \end{subfigure}
    
\caption{Contraction of edges in $K_{5}$. Top row: non-adjacent edges~\ref{fig:graph1-non-adjacent} before and~\ref{fig:graph2-non-adjacent} after contraction. Bottom row: adjacent edges~\ref{fig:graph1-adjacent} and~\ref{fig:graph2-adjacent} after contraction.}
    \label{fig:contraction-all}
\end{figure}

\begin{proof}
First, consider the case where two edges $e$ and $w$ are not adjacent. Let us contract the edges $e$ and $w$ --- we remove both edges from the complete graph $K_{n}$ and merge the vertices that were incident with them. We will call this operation a contraction of an edge. An example for $K_{5}$ is provided in figures~\ref{fig:graph1-non-adjacent} and~\ref{fig:graph2-non-adjacent}, where contracted edges are marked by red color and are not neighboring. After that we are left with the graph on $n - 2$ vertices, with two new vertices $v_{e}$ and~$v_{w}$. 
Let us derive the number of spanning trees of contracted graph. We will focus on those new and special vertices $v_{e}$ and $v_{w}$. Let $V$ be the set of vertices of a contracted graph. If $v \in V \setminus \{v_{e}, v_{w}\}$, then there are $2$ ways to connect $v$ with $v_{e}$ or $v_{w}$. We use Lemma~\ref{lemma:vertex_degree_in_tree} to find number of STs. There must be $i - 1$ occurrences of $v_{e}$ in Pr\"{u}fer sequence and $j -1$ occurrences of $v_{w}$ and the remaining $n - 2 - i - j$ positions can be occupied by any other vertices. 
Let us denote by $V$ the set of vertices of a graph with contracted edges. If $v \in V \setminus \{v_{e}, v_{w}\}$ then there are $2$ ways to connect $v$ to $v_{e}$ and $v_{w}$. If $v_{e}$ and~$v_{w}$ need to be connected, then there are $4$ ways to do that (cf. figure~\ref{fig:graph2-non-adjacent}), but it is already contained in the above formula. 
\begin{align*}
    \sum_{i=1}^{n-3} \sum_{j=1}^{n-2-i} \binom{n-4}{i-1,j-1,n-2-i-j} (n-4)^{n-2-i-j} 2^i 2^j =\\
    \sum_{i+j=0}^{n-4} \binom{n-4}{i,j,n-4-i-j} (n-4)^{n-4-i-j} 2^{i+j+2} = 4n^{n-4}~.
\end{align*}

Finally,
\[
    \Pr\left(X^{T}_{e} \cdot X^{T}_{w} = 1\right) = \frac{4}{n^{2}}~.
\]

Now we prove the case where the edges $e$ and $w$ are incident with some common vertex $v \in V\left(K_{n}\right)$. 
We again contract the edges $e$ and $w$ and as a result we obtain a graph on the $n - 2$ vertices with only one new vertex $v'$. An example for $K_{5}$ is provided in picture~\ref{fig:graph1-adjacent}, where contracted edges are marked by red color and are not neighboring. In this situation, we can extend the graph to $n$ vertices by joining each edge incident to $v'$ to one of the three endpoints -- the vertices which were endpoints of the edges $e$ and $w$ (cf.~\ref{fig:graph2-adjacent}). It can be done in $3^{d\left(v'\right)}$ ways, where $d(v')$ is the degree of the vertex $v'$ in a graph with contracted edges. From Lemma~\ref{lemma:vertex_degree_in_tree} and binomial theorem we obtain, 
\[
    \sum_{i = 0}^{n - 4} \binom{n - 4}{i} 3^{i + 1} (n - 3)^{n - 4 - i} = 3n^{n - 4}~.
\]

Similarly, as in the previous case of non-adjacent edges, there are $i$ occurrences of $v'$ label in the Pr\"ufer code and the rest of $n - 4 - i$ positions can be allocated for the remaining $n - 3$ labels. 
From the above, we obtain the thesis of the lemma when the edges $e$ and $w$ are adjacent.
\[
    \Pr\left(X^{T}_{e} \cdot X^{T}_{w} = 1 \right) = \frac{3}{n^{2}}~.
\]

This concludes the proof.
\end{proof}

\section{Expectation of the number of common edges in $k$ uniformly selected STs}

In this section, we provide a formula for the expected value of the number of common edges in $k$ independent random spanning trees i.e. the number of edges that are present in each tree $T_{i}$ for $i \in [k]$.
Let $T_{1}, T_{2}, \ldots, T_{k}$ denote random spanning trees of $K_n$ chosen independently and uniformly at random.
Let us define the random variable $C_{L} = \left|\bigcap\limits_{i \in L} E(T_{i}) \right|$, which is a cardinality of the set of common edges in all $|L|$  trees,  where $L \subseteq [k]$.
Now we are ready to formulate the first result in this section:

\begin{theorem} \label{CommonEdgesE}
Let $C_{L}$ be a random variable equal to the number of common edges in $|L|$ independently and uniformly selected RSTs $T_{i}$, for $i \in L$, where $L \subseteq [k]$.
Then
\[
\EE{C_{L}} = \binom{n}{2} \left(\frac{2}{n}\right)^{|L|} = 2^{|L|} \cdot \frac{n(n - 1)}{2n^{|L|}}~.
\]
\end{theorem}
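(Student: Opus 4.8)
The plan is to reduce the computation to the single-edge probability from Lemma~\ref{lemma:edge-probability} via linearity of expectation, exploiting the independence of the $k$ trees. First I would observe that an edge $e \in E(K_n)$ is a common edge of the trees $\{T_i\}_{i \in L}$ precisely when $e \in E(T_i)$ for every $i \in L$, i.e. when $\prod_{i \in L} X^{T_i}_e = 1$. Hence I would write the counting variable as a sum of products of indicators over all edges of the clique:
\[
    C_L = \sum_{e \in E(K_n)} \prod_{i \in L} X^{T_i}_e~.
\]

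Next I would take expectations and push them through the finite sum using linearity, obtaining $\EE{C_L} = \sum_{e \in E(K_n)} \EE{\prod_{i \in L} X^{T_i}_e}$. The key structural step is that the trees $T_1, \ldots, T_k$ are chosen independently, so for a fixed edge $e$ the indicator variables $\{X^{T_i}_e\}_{i \in L}$ are mutually independent; therefore the expectation of their product factors as the product of their expectations. Applying Lemma~\ref{lemma:edge-probability}, each factor equals $\EE{X^{T_i}_e} = \Pr(e \in E(T_i)) = \tfrac{2}{n}$, and since this value does not depend on $i$ we get $\EE{\prod_{i \in L} X^{T_i}_e} = \left(\tfrac{2}{n}\right)^{|L|}$ for every edge $e$.

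Finally, I would note that this per-edge contribution is identical across all $\binom{n}{2}$ edges of $K_n$, so summing yields $\EE{C_L} = \binom{n}{2}\left(\tfrac{2}{n}\right)^{|L|}$, which rearranges to the second displayed form $2^{|L|} \cdot \tfrac{n(n-1)}{2 n^{|L|}}$. The only point requiring genuine care—the ``main obstacle,'' though a mild one—is the justification that the per-edge expectation factorizes: this rests entirely on the independence of the sampled trees, and I would state it explicitly so that the single-tree marginal from Lemma~\ref{lemma:edge-probability} can legitimately be raised to the power $|L|$. Everything else is a routine application of linearity of expectation and the edge count of the clique.
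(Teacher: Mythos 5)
Your proposal is correct and follows essentially the same route as the paper's own proof: writing $C_L = \sum_{e \in E(K_n)} \prod_{i \in L} X^{T_i}_e$, applying linearity of expectation, and factorizing the per-edge expectation via the independence of the trees together with Lemma~\ref{lemma:edge-probability}. Your explicit remark that the factorization step is the one point needing justification matches the (implicit) crux of the paper's argument as well.
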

\begin{proof}

Fix set $L$.
Since each edge must be present in each tree from $L$, the following holds:
\[
C_{L} = \sum_{e \in E(K_{n})} \prod_{j \in L} X^{T_{j}}_{e} ~.
\]
From the independence of the indicator random variables for $j \in L$ and Lemma~\ref{lemma:edge-probability}, we obtain the thesis of the theorem:
\begin{align*}
\EE{C_{L}} &= \sum_{e \in E(K_{n})} \EE{\prod_{j \in L} X^{T_{j}}_{e}} = \sum_{e \in E(K_{n})} \prod_{j \in L}\EE{X^{T_{j}}_{e}}\\
&= \binom{n}{2} \cdot \left(\frac{2}{n}\right)^{|L|} = \frac{n(n-1)}{2} \cdot \frac{2^{|L|}}{n^{|L|}}~.
\end{align*}
\end{proof} \par

Let us introduce the random variable $S_k:=\left|\bigcup_{i = 1}^{k} E(T_{i})\right|$, which represents the number of edges in a $k$-splicer.
The following fact expresses the random variable $S_{k}$ using the principle of inclusion and exclusion in terms of the number of common edges in selected RSTs.
\begin{fact} \label{fact:NumberOfCommonEdgesRandomVariable}
Let $S_{k}$ be a random variable of the number of edges in a $k$-splicer.
Then
\begin{equation}
\label{equation:sk_formula}
     S_{k} = \sum_{a=1}^{k}(-1)^{a+1} \sum_{L\in [k]^a} \left|\bigcap_{i\in L} E(T_i)\right|~.
\end{equation}
\end{fact}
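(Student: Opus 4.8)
The plan is to recognize this fact as the inclusion--exclusion principle applied to the finite sets $E(T_1), \dots, E(T_k)$, whose union has cardinality $S_k$ by definition. Because~(\ref{equation:sk_formula}) is an identity between random variables, it suffices to verify it for an arbitrary fixed realization of the trees; once the trees are fixed, everything becomes a deterministic statement about subsets of $E(K_n)$, which I would establish by comparing the two sides edge by edge.

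First I would rewrite both sides as sums over edges. With the indicator notation of~(\ref{equation:indicator}), the left-hand side reads $S_k = \sum_{e \in E(K_n)} \mathbbm{1}\!\left[e \in \bigcup_{i=1}^k E(T_i)\right]$, and each intersection term satisfies $\left|\bigcap_{i \in L} E(T_i)\right| = \sum_{e \in E(K_n)} \prod_{i \in L} X^{T_i}_e$. Substituting the latter into the right-hand side of~(\ref{equation:sk_formula}) and swapping the two finite summations reduces the claim to a single per-edge identity: for each $e$, the signed count accumulated on the right must equal the left-hand indicator.

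Next I would fix an edge $e$ and let $m \in \{0, 1, \dots, k\}$ denote the number of trees among $T_1, \dots, T_k$ that contain $e$. On the left, $e$ contributes $1$ exactly when $m \geq 1$. On the right, a size-$a$ subset $L$ satisfies $e \in \bigcap_{i \in L} E(T_i)$ precisely when $L$ is drawn from the $m$ indices whose trees contain $e$, so there are $\binom{m}{a}$ such subsets. Hence the total contribution of $e$ to the right-hand side is $\sum_{a=1}^{k} (-1)^{a+1} \binom{m}{a} = \sum_{a=1}^{m} (-1)^{a+1}\binom{m}{a}$, where the truncation uses $\binom{m}{a} = 0$ for $a > m$.

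The final step is the elementary binomial cancellation: expanding $(1-1)^m = \sum_{a=0}^{m} (-1)^a \binom{m}{a} = 0$ for $m \geq 1$ and isolating the $a = 0$ term yields $\sum_{a=1}^{m} (-1)^{a+1} \binom{m}{a} = 1$, which matches the left-hand contribution, while the case $m = 0$ gives $0$ on both sides trivially. Since the two sides of~(\ref{equation:sk_formula}) thus agree edge by edge for every realization of the trees, they agree as random variables. I do not expect a genuine obstacle here, as the statement is just the standard inclusion--exclusion formula in disguise; the only care needed is in the per-edge bookkeeping and in keeping the binomial identity clean.
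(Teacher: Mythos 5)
Your proof is correct and takes the same route the paper implicitly relies on: the paper states this Fact without proof as a direct instance of the inclusion--exclusion principle applied to the sets $E(T_1),\dots,E(T_k)$, and your pointwise, per-edge verification via the binomial identity $\sum_{a=1}^{m}(-1)^{a+1}\binom{m}{a}=1$ for $m\geq 1$ is exactly the standard justification of that principle, correctly reading $L\in[k]^a$ as a size-$a$ subset of $[k]$ (consistent with the $\binom{k}{i}$ factor used later in the proof of Theorem~\ref{theorem:NumberOfEdgesInGraph}). No gaps; the argument is complete and, if anything, more detailed than the paper requires.
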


Theorem~\ref{CommonEdgesE} and Fact~\ref{fact:NumberOfCommonEdgesRandomVariable} implies the following result for the random variable $S_{k}$.
\begin{theorem} \label{theorem:NumberOfEdgesInGraph}
Let $S_{k}$ be a random variable of the number of edges in a graph obtained by joining $k$~random spanning trees on $n$ vertices.
Then
\[
\EE{S_{k}} = k(n-k) + O\left(\frac{1}{n}\right).
\]
\end{theorem}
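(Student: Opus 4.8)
The plan is to take the expectation of the inclusion--exclusion identity in Fact~\ref{fact:NumberOfCommonEdgesRandomVariable}, collapse the resulting double sum into a single binomial sum via Theorem~\ref{CommonEdgesE}, recognize that sum in closed form, and then expand asymptotically.

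First I would apply linearity of expectation to the formula (\ref{equation:sk_formula}). Since $\EE{C_L}$ depends on $L$ only through its cardinality, Theorem~\ref{CommonEdgesE} gives that every subset $L$ of a fixed size $a$ contributes the same value $\binom{n}{2}(2/n)^{a}$, and there are $\binom{k}{a}$ such subsets. This turns the inner sum over $L$ into a factor $\binom{k}{a}$, yielding
\[
\EE{S_{k}} = \binom{n}{2} \sum_{a=1}^{k} (-1)^{a+1} \binom{k}{a} \left(\frac{2}{n}\right)^{a}.
\]
Next I would recognize the sum as a binomial expansion: using $(-1)^{a+1} = -(-1)^{a}$ and completing the $a=0$ term, one gets $\sum_{a=1}^{k} (-1)^{a+1}\binom{k}{a}(2/n)^{a} = 1 - (1 - 2/n)^{k}$. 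This produces the exact closed form
\[
\EE{S_{k}} = \frac{n(n-1)}{2}\left[1 - \left(1 - \frac{2}{n}\right)^{k}\right].
\]

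Finally, treating $k$ as a fixed constant, I would expand $(1 - 2/n)^{k}$ by the binomial theorem and multiply by $n(n-1)/2$ term by term. The $a=1$ contribution gives $k(n-1) = kn - k$; the $a=2$ contribution gives $-k(k-1)(n-1)/n = -k(k-1) + k(k-1)/n$; and every term with $a \geq 3$ carries a factor $(n-1)/n^{a-1} = O(1/n^{a-2})$, hence is $O(1/n)$. Collecting the leading pieces gives $kn - k - k(k-1) = kn - k^{2} = k(n-k)$, while all remaining pieces (the $+k(k-1)/n$ from $a=2$ together with the entire $a\geq 3$ tail) are absorbed into $O(1/n)$, which establishes the claim.

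The main obstacle is bookkeeping rather than anything conceptual. The delicate point is that one must retain two full orders of the expansion of $(1-2/n)^{k}$ and not merely the leading $2k/n$ term: because the prefactor $n(n-1)/2$ is quadratic in $n$, the $a=2$ term contributes a nonvanishing constant $-k^{2}$ (after combining with the $a=1$ constant $-k$), and dropping it would give the wrong answer. One must also verify uniformly that the $a \geq 3$ tail is genuinely $O(1/n)$, which is immediate for fixed $k$ since it is a finite sum of terms each of order $O(1/n)$ or smaller.
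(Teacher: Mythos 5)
Your proposal is correct and follows essentially the same route as the paper: take expectations in the inclusion--exclusion identity, use Theorem~\ref{CommonEdgesE} to collapse the sum over subsets into a binomial sum, obtain the exact closed form $\binom{n}{2}\bigl[1-\left(1-\tfrac{2}{n}\right)^{k}\bigr]$, and expand to two orders (the paper labels this step a Taylor expansion, which for integer $k$ is exactly your finite binomial expansion). Your explicit bookkeeping of the $a=2$ term and the $a\geq 3$ tail matches the paper's computation, with $k$ treated as fixed in both.
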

\begin{proof}
Taking the expectation of both sides of formula~(\ref{equation:sk_formula}), we get
\[
\EE{S_{k}} = \sum_{i = 1}^{k} (-1)^{i + 1}\binom{k}{i} \EE{C_{[i]}}~.
\]
Upon substituting the formula provided by Theorem~\ref{CommonEdgesE}, we deduce the thesis:

\begin{align*}
\EE{S_{k}} &= -\frac{n(n - 1)}{2} \sum_{i = 1}^{k} \binom{k}{i} (-1)^{i}\left(\frac{2}{n}\right)^{i} \\ 
&= -\binom{n}{2}\left[\left(1-\frac{2}{n}\right)^k -1\right] \quad \text{(by Binomial Theorem, adding and subtracting the term for $i = 0$)} \\  
&= k(n - 1) - \frac{k(k - 1)(n - 1)}{n} + O\left(\frac{1}{n}\right) \quad \text{(by Taylor Expansion)} \nonumber\\
&= k(n - 1) - k(k - 1) + O\left(\frac{1}{n}\right) 
=k(n-k) + O\left(\frac{1}{n}\right)~. \nonumber
\end{align*}
\end{proof}

\section{Random variable of the number of repeating edges}
\label{section:rv_of_number_of_repeating_edges}
In this Section we will examine $M$ which is the random variable of the~number of repeating edges. First, we will concentrate on a single edge $e \in E(K_{n})$ and introduce the random variable $R_{e}$ of the number of repetitions of that edge. Only surplus occurrences of edge $e$ are counted in the number of repetitions of $e$.
If edge $e$ occurred only once, then the number of its repetitions is zero. The following lemma allows us to express RV $R_{e}$ in terms of indicators $X^{T_{i}}_{e}$, where $i \in [k]$.

\begin{fact}
\label{fact:number_of_repeating_edges}
    Let $e \in E(K_{n})$ be an edge in the clique $K_{n}$ and let $R_{e}$ be the random variable of the number of repetitions of the edge $e$ in a multi $k$-splicer $G$. Then
    \begin{equation}
    \label{equation:re_alternate_form}
        R_{e} = \sum_{i = 1}^{k} X^{T_{i}}_{e} - \max(X^{T_{1}}_{e}, X^{T_{2}}_{e}, \ldots, X^{T_{k}}_{e})
    \end{equation}
\end{fact} 

From the Fact~\ref{fact:number_of_repeating_edges} the expected value of the random variable $M$ can be derived.
\begin{theorem}
\label{theorem:expected_value_of_m_rv}
    Let $M$ be the random variable of the number of multiple edges in a multi $k$-splicer $G$. Then
    \[
        \EE{M} = k(k - 1) + \frac{k(k - 1)(2k - 1)}{3n} + O\left(\frac{1}{n^{2}}\right)
    \]
\end{theorem}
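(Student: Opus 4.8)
The plan is to reduce $\EE{M}$ to a single-edge quantity by linearity of expectation and symmetry, and then expand asymptotically in $\frac{1}{n}$. Writing $M = \sum_{e \in E(K_{n})} R_{e}$, note that because the trees $T_{1}, \ldots, T_{k}$ are chosen independently and uniformly, every edge of $K_{n}$ plays the same role, so $\EE{R_{e}}$ is the same for all $e$. Hence $\EE{M} = \binom{n}{2}\,\EE{R_{e}}$, and the whole problem collapses to computing $\EE{R_{e}}$ for one fixed edge and then multiplying by the number of edges.

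To obtain $\EE{R_{e}}$, I would take expectations in equation~(\ref{equation:re_alternate_form}) term by term. The summation contributes $\EE{\sum_{i=1}^{k} X^{T_{i}}_{e}} = \frac{2k}{n}$ directly from Lemma~\ref{lemma:edge-probability} and linearity. The essential step is the maximum: since each $X^{T_{i}}_{e}$ is a $0/1$ indicator, one has $\max_{i} X^{T_{i}}_{e} = 1 - \prod_{i=1}^{k}\left(1 - X^{T_{i}}_{e}\right)$, i.e. the maximum is exactly the indicator that $e$ occurs in at least one of the $k$ trees. Taking expectations and exploiting independence of the trees together with $\Pr\left(e \notin E(T_{i})\right) = 1 - \frac{2}{n}$ yields $\EE{\max_{i} X^{T_{i}}_{e}} = 1 - \left(1 - \frac{2}{n}\right)^{k}$. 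Combining the two pieces gives the exact identity $\EE{R_{e}} = \frac{2k}{n} - 1 + \left(1 - \frac{2}{n}\right)^{k}$.

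It then remains to expand $\binom{n}{2}\,\EE{R_{e}}$ asymptotically. Expanding $\left(1 - \frac{2}{n}\right)^{k}$ by the Binomial Theorem, the constant term and the linear term cancel the $-1$ and the $\frac{2k}{n}$ respectively, so that $\EE{R_{e}}$ starts at order $\frac{1}{n^{2}}$ with leading coefficient coming from $\binom{k}{2}$ and the next coefficient from $\binom{k}{3}$. Writing $\binom{n}{2} = \frac{n(n-1)}{2}$ and multiplying, the $\binom{k}{2}$ term produces the principal term $k(k-1)$, while the $\frac{1}{n}$ correction is fed by two sources that must be tracked simultaneously: the $\frac{n-1}{n}$ factor concealed in $\binom{n}{2}$ acting on the $\binom{k}{2}$ term, and the cubic $\binom{k}{3}$ term of the binomial expansion. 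The plan is to collect both contributions, push everything of order $\frac{1}{n^{2}}$ and smaller into the error term, and verify that what survives matches the stated $\frac{k(k-1)(2k-1)}{3n}$, yielding the claimed expansion.

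I expect this last step to be the main obstacle. The principal term $k(k-1)$ is robust, but the sign and the precise constant of the $\frac{1}{n}$ term are delicate, since they are assembled from several binomial contributions with competing signs, and a single misattribution among the $\binom{k}{2}$ and $\binom{k}{3}$ pieces would alter the final coefficient. I would therefore keep the expansion fully symbolic, retaining $\binom{k}{2}$ and $\binom{k}{3}$ explicitly and factoring out $k(k-1)$ only at the very end, so that the $k$-dependence collapses into the closed form $\frac{k(k-1)(2k-1)}{3}$ before any numerical simplification is attempted; this is exactly the point at which the sign must be confirmed against the statement of the theorem.
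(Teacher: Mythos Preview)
Your proposal is correct and follows essentially the same route as the paper: write $M=\sum_{e} R_{e}$, compute $\EE{R_{e}}=\frac{2k}{n}-1+\left(1-\frac{2}{n}\right)^{k}$ via the identity $\max_{i} X^{T_{i}}_{e}=1-\prod_{i}\bigl(1-X^{T_{i}}_{e}\bigr)$ and independence of the trees, then multiply by $\binom{n}{2}$ and Taylor-expand. Your caution about the sign of the $\tfrac{1}{n}$ coefficient is well placed.
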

\begin{proof}
From equation~(\ref{equation:re_alternate_form}) and the linearity of expectation, the expected value of $R_{e}$ can be derived.
\[
    \EE{R_{e}} = \sum_{i = 1}^{k} \EE{X^{T_{i}}_{e}} - \EE{\max(X^{T_{1}}_{e}, X^{T_{2}}_{e}, \ldots, X^{T_{k}}_{e})}
\]

Note that the maximum of indicator random variables $X^{T_{i}}_{e}$ can be expressed as follows.
\[
    \max(X^{T_{1}}_{e}, X^{T_{2}}_{e}, \ldots, X^{T_{k}}_{e}) = 1 - \prod_{i = 1}^{k} (1 - X^{T_{i}}_{e})~.
\]

Thus from Lemma~\ref{lemma:edge-probability}
\[
    \EE{R_{e}} = \frac{2k}{n} - 1 + \left(1 - \frac{2}{n}\right)^{k}~.
\]

The random variable $M$ can be expressed as
\[
    M = \sum_{e \in E(K_{n})} R_{e}~.
\]

From above, we obtain
\[
    \EE{M} = \sum_{e \in E(K_{n})} \EE{R_{e}} = \frac{n(n - 1)}{2} \left[\frac{2k}{n} - 1 + \left(1 - \frac{2}{n}\right)^{k}\right]~.
\]

Thus, by Taylor expansion,
\[
    \EE{M} = k(k - 1) + \frac{k(k - 1)(2k - 1)}{3n} + O\left(\frac{1}{n^{2}}\right)~.
\]

\end{proof}

In the proof of Theorem~\ref{theorem:expected_value_of_m_rv} the random variable $M$ is expressed in terms of the variables $R_e$, where $e \in E(K_n)$. Consequently, deriving the variance of $M$ requires an expression for the variance of $R_e$.
\begin{lemma}
\label{lemma:variance_of_repetitions_of_single_edge}
    Let $R_{e}$ be the random variable of the number of repeated occurrences of the edge $e \in E(K_{n})$ in a multi $k$-splicer $G$. Then
    \[
        \VAR{R_{e}} = \frac{2k}{n}\left(1 - \frac{2}{n}\right) + \left(1 - \frac{2}{n}\right)^{k} \left(1 - \frac{4k}{n} \right)- \left(1 - \frac{2}{n}\right)^{2k}~. 
    \]
\end{lemma}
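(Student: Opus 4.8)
The plan is to reduce everything to the binomial count $N := \sum_{i=1}^{k} X^{T_i}_e$, the number of the $k$ trees that contain $e$. Since the trees $T_1,\dots,T_k$ are independent and each satisfies $\Pr(X^{T_i}_e=1)=\tfrac{2}{n}$ by Lemma~\ref{lemma:edge-probability}, the variable $N$ is $\mathrm{Binomial}(k,\tfrac{2}{n})$. The maximum appearing in Fact~\ref{fact:number_of_repeating_edges} is exactly the indicator $\mathbbm{1}_{\{N\ge 1\}}$, so I would work throughout with the compact representation $R_e = N - \mathbbm{1}_{\{N\ge 1\}}$.

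With this representation I would compute the variance through the decomposition $\VAR{R_e}=\VAR{N}+\VAR{\mathbbm{1}_{\{N\ge 1\}}}-2\COV{N,\mathbbm{1}_{\{N\ge 1\}}}$. The first term is immediate: $\VAR{N}=k\cdot\tfrac{2}{n}\big(1-\tfrac{2}{n}\big)$, which already produces the leading term $\tfrac{2k}{n}\big(1-\tfrac{2}{n}\big)$ of the claimed formula. For the indicator, writing $a:=\big(1-\tfrac{2}{n}\big)^{k}=\Pr(N=0)$, it is Bernoulli with success probability $1-a$, so $\VAR{\mathbbm{1}_{\{N\ge 1\}}}=(1-a)a$.

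The one step that needs care is the covariance, and this is where I expect the only real subtlety to lie. The key algebraic observation is that $N\cdot \mathbbm{1}_{\{N\ge 1\}}=N$ (the product is $0$ when $N=0$ and equals $N$ otherwise), so $\EE{N\,\mathbbm{1}_{\{N\ge 1\}}}=\EE{N}=\tfrac{2k}{n}$. Hence $\COV{N,\mathbbm{1}_{\{N\ge 1\}}}=\EE{N}\big(1-\Pr(N\ge1)\big)=\tfrac{2k}{n}\,a$. Substituting the three pieces gives $\VAR{R_e}=\tfrac{2k}{n}\big(1-\tfrac{2}{n}\big)+a-a^{2}-\tfrac{4k}{n}a$, and regrouping $a-\tfrac{4k}{n}a=a\big(1-\tfrac{4k}{n}\big)$ together with $a^{2}=\big(1-\tfrac{2}{n}\big)^{2k}$ reproduces the stated expression.

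As an internal check one can confirm consistency with the already-derived mean $\EE{R_e}=\tfrac{2k}{n}-1+\big(1-\tfrac{2}{n}\big)^{k}$ by recomputing the variance instead from $\EE{R_e^{2}}-\EE{R_e}^{2}$, using $R_e^{2}=N^{2}-2N+\mathbbm{1}_{\{N\ge 1\}}$ (again via $\mathbbm{1}_{\{N\ge 1\}}^{2}=\mathbbm{1}_{\{N\ge 1\}}$ and $N\,\mathbbm{1}_{\{N\ge 1\}}=N$); both routes must agree, which guards against sign errors in the cross term.
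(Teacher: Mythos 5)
Your proof is correct, and it reaches the stated formula by a tidier route than the paper's. Both arguments start from the same representation---your $R_e = N - \mathbbm{1}_{\{N\ge 1\}}$ is exactly the paper's $R_e = s_e - 1 + P_e$, since $\mathbbm{1}_{\{N\ge 1\}} = 1 - P_e$ with $P_e = \prod_{i=1}^{k}\left(1 - X^{T_i}_e\right)$---and both rest on the same two identities, which in the paper's notation read $\EE{s_e P_e} = 0$ (your $N\,\mathbbm{1}_{\{N\ge 1\}} = N$) and $\EE{P_e^2} = \EE{P_e}$ (your $\mathbbm{1}_{\{N\ge 1\}}^2 = \mathbbm{1}_{\{N\ge 1\}}$). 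The difference is organizational: the paper expands $\EE{(s_e - 1 + P_e)^2}$ into five expectations, computes $\EE{s_e^2} = \frac{2k}{n} + \frac{4k(k-1)}{n^2}$ and, separately, the squared mean $\left(\EE{R_e}\right)^2$, then subtracts; you instead apply $\VAR{N - \mathbbm{1}_{\{N\ge 1\}}} = \VAR{N} + \VAR{\mathbbm{1}_{\{N\ge 1\}}} - 2\COV{N, \mathbbm{1}_{\{N\ge 1\}}}$, which lets the constant $-1$ drop out immediately, folds the second moment of $s_e$ into the binomial variance $k\cdot\frac{2}{n}\left(1-\frac{2}{n}\right)$, and reduces the only nontrivial step to $\COV{N, \mathbbm{1}_{\{N\ge 1\}}} = \frac{2k}{n}\left(1-\frac{2}{n}\right)^{k}$. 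Your route buys brevity and error resistance: each of the three pieces maps onto one summand of the final formula, and your cross-check via $R_e^2 = N^2 - 2N + \mathbbm{1}_{\{N\ge 1\}}$ reproduces the paper's intermediate expression for $\EE{R_e^2}$ verbatim. What the paper's more pedestrian expansion buys is reusability: the same template $\EE{(s_e - 1 + P_e)(s_{e'} - 1 + P_{e'})}$ and the stored value of $\left(\EE{R_e}\right)^2$ are deployed again in the proof of Lemma~\ref{lemma:covariance_of_m} for the cross-moments $\EE{R_e R_{e'}}$ of two \emph{distinct} edges, where no single binomial variable is available and your indicator shortcut does not directly apply.
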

\begin{proof}
    First, we calculate the second moment of RV $R_{e}$. Let us denote by $s_{e} = \sum_{i = 1}^{k} X^{T_{i}}_{e}$ and $P_{e} = \prod_{i = 1}^{k} (1 - X^{T_{i}}_{e})$. Then
    \[
        \EE{R^{2}_{e}} = \EE{(s_{e} - 1 + P_{e})^{2}} = \EE{s^{2}_{e}} - 2\EE{s_{e}} + 2\EE{s_{e}P_{e}} - 2\EE{P_{e}} + 1 + \EE{P^{2}_{e}}~.
    \]

    From Lemma~\ref{lemma:edge-probability} it is easy to see that $\EE{s_{e}} = \frac{2k}{n}$ and $\EE{P_{e}} = \EE{P^{2}_{e}} = \left(1 - \frac{2}{n}\right)^{k}$. Additionally Lemma~\ref{lemma:edge-probability} allows us to calculate rest of the required quantities:
    \begin{align*}
     \EE{s^{2}_{e}} &= \frac{2k}{n} + \frac{4k(k - 1)}{n^{2}}~,\\
     \EE{s_{e}P_{e}} &= 0 ~.
    \end{align*}

    Thus
    \[
        \EE{R^{2}_{e}} = \left(\frac{2k}{n} + \frac{4k(k - 1)}{n^{2}}\right) - \frac{4k}{n} + 1 - \left(1 - \frac{2}{n}\right)^{k}~.
    \]

    From the proof of Theorem~\ref{theorem:expected_value_of_m_rv} we obtain
    \[
        \left(\EE{R_{e}}\right)^{2} = \frac{4k^{2}}{n^{2}} - \frac{4k}{n} + \frac{4k}{n} \left(1 - \frac{2}{n}\right)^{k} + 1 - 2\left(1 - \frac{2}{n}\right)^{k} + \left(1 - \frac{2}{n}\right)^{2k}
    \]

    From the definition of variance we obtain the thesis of the lemma:
    \[
        \VAR{R_{e}} = \frac{2k}{n}\left(1 - \frac{2}{n}\right) + \left(1 - \frac{2}{n}\right)^{k} \left(1 - \frac{4k}{n} \right)- \left(1 - \frac{2}{n}\right)^{2k}~.
    \]
\end{proof}



Now, we are in a position to derive the variance of RV $M$. If the random variables $R_{e}$ and $R_{e'}$ for $e, e' \in E(K_{n})$ were independent, then we could simply sum the variance values for each edge $e \in E(K_{n})$. Unfortunately, this is not true, and RVs $R_{e}$ and $R_{e'}$ are dependent. The following auxiliary lemma will be useful in deriving the upper bound for variance of the discussed RV.
\begin{lemma}
\label{lemma:covariance_of_m}
    Let $R_{e}$ and $R_{e'}$ be random variables of the number of repeated occurrences of edges $e, e' \in E(K_{n})$ in a multi $k$-splicer. Then
    \begin{itemize}
        \item $\COV{R_{e}, R_{e'}} = 0$, if $e$ and $e'$ are not neighbouring,
        \item $\COV{R_{e}, R_{e'}} = -\frac{k}{n^{2}} + \frac{2k}{n^{2}}\left(1 - \frac{2}{n}\right)^{k -  1} + \left(1 - \frac{4}{n} + \frac{3}{n^{2}}\right)^{k} - \left(1 - \frac{2}{n}\right)^{2k}$, otherwise.
    \end{itemize}
\end{lemma}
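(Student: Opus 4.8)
The plan is to build on the decomposition $R_e = s_e - 1 + P_e$ from the proof of Lemma~\ref{lemma:variance_of_repetitions_of_single_edge}, where $s_e = \sum_{i=1}^k X^{T_i}_e$ and $P_e = \prod_{i=1}^k(1 - X^{T_i}_e)$, and to treat the two cases of the statement separately, the dichotomy being governed by Theorem~\ref{lemma:TreesWithoutEdges}.

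For non-neighbouring $e$ and $e'$ I would argue by independence rather than by computation. Within a single tree $T$, Theorem~\ref{lemma:TreesWithoutEdges} gives $\Pr(X^T_e X^T_{e'} = 1) = \frac{4}{n^2} = \Pr(X^T_e = 1)\,\Pr(X^T_{e'} = 1)$; since for two Bernoulli variables the factorization of the probability that both equal $1$ already forces full independence, $X^T_e$ and $X^T_{e'}$ are independent in each tree. Together with the mutual independence of $T_1, \dots, T_k$, this makes the families $(X^{T_i}_e)_{i}$ and $(X^{T_i}_{e'})_{i}$ independent, so $R_e$ and $R_{e'}$, being functions of these respective families, are independent and $\COV{R_e, R_{e'}} = 0$.

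For adjacent $e, e'$ I would compute $\COV{R_e, R_{e'}} = \EE{R_e R_{e'}} - \EE{R_e}\EE{R_{e'}}$ directly. Expanding $(s_e - 1 + P_e)(s_{e'} - 1 + P_{e'})$ and using linearity, all single-variable expectations are already known from the proofs of Theorem~\ref{theorem:expected_value_of_m_rv} and Lemma~\ref{lemma:variance_of_repetitions_of_single_edge}; what remains are the three joint expectations $\EE{s_e s_{e'}}$, $\EE{s_e P_{e'}} = \EE{s_{e'} P_e}$, and $\EE{P_e P_{e'}}$. Each is handled by splitting over tree indices and inserting the adjacent-case value $\Pr(X^T_e X^T_{e'} = 1) = \frac{3}{n^2}$ from Theorem~\ref{lemma:TreesWithoutEdges} together with $\Pr(X^T_e = 1) = \frac{2}{n}$ from Lemma~\ref{lemma:edge-probability}: in $\EE{s_e s_{e'}}$ the $k$ same-tree summands carry $\frac{3}{n^2}$ while the $k(k-1)$ distinct-tree summands carry $\frac{4}{n^2}$; and $\EE{P_e P_{e'}}$ factorizes across trees into $\bigl(1 - \frac{4}{n} + \frac{3}{n^2}\bigr)^k$ via $\EE{(1 - X^T_e)(1 - X^T_{e'})} = 1 - \frac{4}{n} + \frac{3}{n^2}$.

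I expect the mixed term $\EE{s_e P_{e'}}$ to be the main obstacle, since there the index $i$ ranging over $s_e$ coincides with one factor of $P_{e'}$, so the corresponding tree cannot be treated as independent of the rest; isolating it yields $\EE{X^{T_i}_e(1 - X^{T_i}_{e'})} = \frac{2}{n} - \frac{3}{n^2}$ on that tree times $\bigl(1 - \frac{2}{n}\bigr)^{k-1}$ on the remaining ones, hence $\EE{s_e P_{e'}} = k\bigl(\frac{2}{n} - \frac{3}{n^2}\bigr)\bigl(1 - \frac{2}{n}\bigr)^{k-1}$. The final reconciliation is then a matter of subtracting $(\EE{R_e})^2$, already expanded in the proof of Lemma~\ref{lemma:variance_of_repetitions_of_single_edge}: the $\frac{4k}{n}(1 - \frac{2}{n})^k$ term there combines with the doubled mixed term, after rewriting $(1 - \frac{2}{n})^k = \frac{n-2}{n}(1 - \frac{2}{n})^{k-1}$, into $\frac{2k}{n^2}(1 - \frac{2}{n})^{k-1}$; the $\frac{4k^2}{n^2}$ term leaves $-\frac{k}{n^2}$; the $(1 - \frac{2}{n})^{2k}$ term survives with a minus sign; and the remaining contributions cancel, producing exactly the claimed expression.
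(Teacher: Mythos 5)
Your proposal is correct, and in the adjacent-edge case it follows the paper's proof essentially verbatim: the same decomposition $R_e = s_e - 1 + P_e$, the same three joint moments $\EE{s_e s_{e'}} = \frac{4k^2}{n^2} - \frac{k}{n^2}$, $\EE{s_e P_{e'}} = k\left(\frac{2}{n} - \frac{3}{n^2}\right)\left(1-\frac{2}{n}\right)^{k-1}$ and $\EE{P_e P_{e'}} = \left(1 - \frac{4}{n} + \frac{3}{n^2}\right)^k$, obtained exactly as you describe from Lemma~\ref{lemma:edge-probability} and the adjacent-case value $\frac{3}{n^2}$ of Theorem~\ref{lemma:TreesWithoutEdges}, followed by the same subtraction against $\left(\EE{R_e}\right)^2$; your reconciliation of the mixed terms via $\left(1-\frac{2}{n}\right)^k = \frac{n-2}{n}\left(1-\frac{2}{n}\right)^{k-1}$ checks out and reproduces the $\frac{2k}{n^2}\left(1-\frac{2}{n}\right)^{k-1}$ term. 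Where you genuinely diverge is the non-adjacent case: the paper recomputes all three joint expectations there as well ($\EE{s_e s_{e'}} = \frac{4k^2}{n^2}$, $\EE{s_e P_{e'}} = \frac{2k}{n}\left(1-\frac{2}{n}\right)^k$, $\EE{P_e P_{e'}} = \left(1-\frac{2}{n}\right)^{2k}$) and then verifies by inspection that $\EE{R_e R_{e'}} = \EE{R_e}\EE{R_{e'}}$, whereas you argue structurally that $\Pr(X^T_e X^T_{e'}=1) = \frac{4}{n^2} = \Pr(X^T_e=1)\Pr(X^T_{e'}=1)$ already forces full independence of the two indicators (correct: the remaining three cell probabilities then factor automatically), and that the mutual independence of $T_1,\dots,T_k$ lifts this to independence of the families $(X^{T_i}_e)_i$ and $(X^{T_i}_{e'})_i$, hence of $R_e$ and $R_{e'}$. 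Your route is shorter and yields a strictly stronger conclusion (independence rather than mere uncorrelatedness of $R_e$ and $R_{e'}$), at the price of the Bernoulli-independence observation; the paper's heavier computation keeps the two cases structurally parallel, so the adjacent case reads as a perturbation of the non-adjacent one.
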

\begin{proof}
From the proof of Theorem~\ref{theorem:expected_value_of_m_rv} we obtain
\[
    \left(\EE{R_{e}}\right)^{2} = \frac{4k^{2}}{n^{2}} - \frac{4k}{n} + \frac{4k}{n} \left(1 - \frac{2}{n}\right)^{k} + 1 - 2\left(1 - \frac{2}{n}\right)^{k} + \left(1 - \frac{2}{n}\right)^{2k}~.
\]

Now, we will derive the closed form for $\EE{R_{e} \cdot R_{e'}}$. The form of the random variable $R_{e}$ from the 
Fact~\ref{fact:number_of_repeating_edges} will be used and we will again denote by $s_{e} = \sum_{i = 1}^{k} X^{T_{i}}_{e}$ and $P_{e} = \prod_{i = 1}^{k} \left(1 - X^{T_{i}}_{e}\right)$. Then
\[
    \EE{R_{e} \cdot R_{e'}} = \EE{(s_{e} - 1 + P_{e})(s_{e'} - 1 + P_{e'})}~.
\]

Let us expand the above formula and use the linearity of expectation
\[
    \EE{R_{e} \cdot R_{e'}} \stackrel{d}{=} \EE{s_{e} \cdot s_{e'}} - 2\EE{s_{e}} - 2\EE{P_{e}} + 2\EE{s_{e}P_{e'}} + 1 + \EE{P_{e}P_{e'}}~.
\]

Now, we will consider the case where the edges $e, e' \in E(K_{n})$ are not neighboring. Then from Lemma~\ref{lemma:edge-probability} and Theorem~\ref{lemma:TreesWithoutEdges},
\begin{align*}
    \EE{s_{e}s_{e'}} &= \frac{4k^{2}}{n^{2}}~, \\
    \EE{s_{e}P_{e'}} &= \frac{2k}{n}\left(1 - \frac{2}{n}\right)^{k}, \\
    \EE{P_{e}P_{e'}} &= \left(1 - \frac{2}{n}\right)^{2k}~. \\
\end{align*}

Since $\EE{s_{e}} = \frac{2k}{n}$ and $\EE{P_{e}} = \left(1 - \frac{2}{n}\right)^{k}$, we obtain
\[
    \EE{R_{e}R_{e'}} = \frac{4k^{2}}{n^{2}} - \frac{4k}{n} + \frac{4k}{n}\left(1 - \frac{2}{n}\right)^{k} - 2\left(1 - \frac{2}{n}\right)^{k} + 1 + \left(1 - \frac{2}{n}\right)^{2k}~.
\]

Note, that in the case of $e, e' \in E(K_{n})$ not neighboring, we have
\(
    \EE{R_{e}R_{e'}} = \EE{R_{e}}\EE{R_{e'}},
\)
thus
\(
    \COV{R_{e}, R_{e'}} = 0.
\)

Now, we will consider the case where $e, e'$ are incident with some vertex $v \in [n]$. Then from Lemma~\ref{lemma:edge-probability} and Theorem~\ref{lemma:TreesWithoutEdges}
\begin{align*}
    \EE{s_{e}s_{e'}} &= \frac{4k^{2}}{n^{2}} - \frac{k}{n^{2}}~, \\
    \EE{s_{e}P_{e'}} &= k \left(\frac{2}{n} - \frac{3}{n^{2}}\right)\left(1 - \frac{2}{n}\right)^{k - 1}~, \\
    \EE{P_{e}P_{e'}} &= \left(1 - \frac{4}{n} + \frac{3}{n^{2}}\right)^{k}~.
\end{align*}

After derivation of $\EE{R_{e}R_{e'}}$ in the case where the edges $e$ and $e'$ are incident with some vertex, we obtain
\[
    \COV{R_{e}, R_{e'}} = -\frac{k}{n^{2}} + \frac{2k}{n^{2}}\left(1 - \frac{2}{n}\right)^{k -  1} + \left(1 - \frac{4}{n} + \frac{3}{n^{2}}\right)^{k} - \left(1 - \frac{2}{n}\right)^{2k}~.
\]
\end{proof}

\begin{corollary}
\label{corollary:variance_of_m_random_variable}
    Let $M$ be a random variable of the number of edge repetitions in $k$-splicer $G$. Then
    \[
        \VAR{M}= k(k - 1) - \frac{20k^{3} - 33k^{2} + 13k}{6n} + O\left(\frac{1}{n^{2}}\right)~.
    \]
\end{corollary}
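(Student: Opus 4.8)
The plan is to expand the variance of the sum $M = \sum_{e \in E(K_n)} R_e$ by bilinearity and then use the per-edge quantities already computed. Concretely,
\[
\VAR{M} = \sum_{e \in E(K_n)} \VAR{R_e} + \sum_{\substack{e, e' \in E(K_n) \\ e \neq e'}} \COV{R_e, R_{e'}}.
\]
By the symmetry of the uniform distribution over spanning trees, every $\VAR{R_e}$ equals the common value from Lemma~\ref{lemma:variance_of_repetitions_of_single_edge}, and every covariance depends only on whether $e$ and $e'$ are neighbouring. First I would invoke Lemma~\ref{lemma:covariance_of_m} to discard all non-adjacent pairs, whose covariance vanishes, leaving only the diagonal terms and the adjacent pairs.

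The next step is a counting argument for the surviving terms. There are $\binom{n}{2}$ edges, contributing $\binom{n}{2}\VAR{R_e}$. Two distinct edges are adjacent exactly when they share a single endpoint, so summing over the common vertex gives $n(n-1)(n-2)$ ordered adjacent pairs (no double-counting, since two distinct edges meet in at most one vertex). Hence
\[
\VAR{M} = \binom{n}{2}\VAR{R_e} + n(n-1)(n-2)\COV{R_e, R_{e'}},
\]
where $\COV{R_e, R_{e'}}$ now denotes the adjacent-pair value from Lemma~\ref{lemma:covariance_of_m}. Substituting both closed forms reduces the claim to an asymptotic expansion in $1/n$.

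The final step, which I expect to be the main obstacle, is the Taylor bookkeeping. The variance term is comparatively gentle: expanding the bracket in Lemma~\ref{lemma:variance_of_repetitions_of_single_edge} gives $\VAR{R_e} = \frac{2k(k-1)}{n^2} + \frac{4k(k-1)(k-2)}{3n^3} + O(1/n^4)$, so the prefactor $\binom{n}{2} = \frac{n^2 - n}{2}$ yields the leading $k(k-1)$ together with a $1/n$ contribution of $\frac{k(k-1)(2k-7)}{3n}$. The delicate part is the covariance: one must verify that both its $1/n^2$ term (where $\frac{k}{n^2}$ from the first two summands cancels $-\frac{k}{n^2}$ from the difference of powers) and its $1/n^3$ term cancel, so that $\COV{R_e, R_{e'}} = O(1/n^4)$. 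Because the prefactor is cubic in $n$, this forces me to expand the mixed binomial powers $(1 - 4/n + 3/n^2)^k$ and $(1 - 2/n)^{2k}$ all the way to order $1/n^4$, keeping the $\binom{k}{3}$ and $\binom{k}{4}$ contributions, in order to recover the surviving coefficient $\frac{k(k-1)(9-8k)}{2}$; the covariance then contributes $\frac{k(k-1)(9-8k)}{2n}$. Adding the two $1/n$ pieces and simplifying $\frac{k(k-1)(2k-7)}{3} + \frac{k(k-1)(9-8k)}{2} = -\frac{20k^3 - 33k^2 + 13k}{6}$ gives the stated bound. The principal risk throughout is arithmetic slippage in these higher-order expansions, precisely because the leading covariance orders cancel and the final $1/n$ coefficient is determined entirely by the next surviving order.
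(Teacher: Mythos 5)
Your proposal is correct and follows essentially the same route as the paper's proof: the bilinear decomposition $\VAR{M} = \sum_{e} \VAR{R_e} + \sum_{e \neq e'} \COV{R_e, R_{e'}}$, discarding non-adjacent pairs via Lemma~\ref{lemma:covariance_of_m}, the count of $n(n-1)(n-2)$ ordered adjacent pairs (matching the paper's $2\times$ unordered count), and asymptotic expansion of the two closed forms. Your expansions check out --- in particular, the cancellation of the $1/n^2$ and $1/n^3$ covariance orders and the surviving coefficient $\frac{k(k-1)(9-8k)}{2}$ at order $1/n^4$ are exactly what the paper's stated expansion $\frac{-8k^3+17k^2-9k}{2n}$ encodes, and your two $1/n$ pieces sum to the claimed $-\frac{20k^3-33k^2+13k}{6n}$.
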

\begin{proof}
We will use formula for the variance of sum of random variables. From the proof of Theorem~\ref{theorem:expected_value_of_m_rv} it follows that:

\begin{equation}
    \label{equation:variance_of_m}
    \VAR{M} = \VAR{\sum_{e \in E(K_{n})} R_{e}} = \sum_{e \in E(K_{n})} \VAR{R_{e}} + 2 \cdot \sum_{\mathclap{\substack{e \neq e' \\ e,e' \in E(K_{n})}}} \COV{R_{e}, R_{e'}}~.
\end{equation}

From Lemma~\ref{lemma:variance_of_repetitions_of_single_edge}, the first part of RHS of equation~(\ref{equation:variance_of_m}) can be written as

\begin{align*}
    \sum_{e \in E(K_{n})} \VAR{R_{e}} &= \frac{n(n - 1)}{2} \left[ \frac{2k}{n}\left(1 - \frac{2}{n}\right)  + \left(1 - \frac{2}{n}\right)^{k} \left(1 - \frac{4k}{n}\right) - \left(1 - \frac{2}{n}\right)^{2k}\right]~.
\end{align*}

By Lemma~\ref{lemma:covariance_of_m} it is known that covariance of $R_{e}$ and $R_{e'}$ is zero, when $e$ and $e'$ are not neighboring. Thus, it suffices to consider only pairs of neighboring edges $e$ and $e'$. The second part of RHS of equation~(\ref{equation:variance_of_m}) can be written as:

\begin{align*}
2 \cdot \sum_{\mathclap{\substack{e \neq e' \\ e,e' \in E(K_{n})}}} \COV{R_{e}, R_{e'}}  = n(n - 1)(n - 2) 
\left[ -\frac{k}{n^{2}} + \frac{2k}{n^{2}}\left(1 - \frac{2}{n}\right)^{k - 1} + \left(1 - \frac{4}{n} + \frac{3}{n^{2}}\right)^{k} - \left(1 - \frac{2}{n}\right)^{2k}\right]~.
\end{align*}

We will write asymptotic expansion for both parts of RHS of equation~(\ref{equation:variance_of_m}),

\begin{equation}
\label{equation:first_part_rhs}
\sum_{e \in E(K_{n})} \VAR{R_{e}} = k(k - 1) + \frac{2k^{3} - 9k^{2} + 7k}{3n} + O\left(\frac{1}{n^{2}}\right)~,
\end{equation}

\begin{equation}
\label{equation:second_part_rhs}
2 \cdot \sum_{\mathclap{\substack{e \neq e' \\ e,e' \in E(K_{n})}}} \COV{R_{e}, R_{e'}} = \frac{-8k^{3} + 17k^{2} - 9k}{2n} +O\left(\frac{1}{n^{2}}\right)~.
\end{equation}

Adding asymptotic expansions of~(\ref{equation:first_part_rhs}) and~(\ref{equation:second_part_rhs}), we obtain corollary.
\end{proof}

\begin{theorem}
\label{theorem:rv_m_concentration_coefficient}
    Let $M$ be the random variable of the number of repeating edges in a multi $k$-splicer~$G$. Then for $s > 0$,
    \[
        \lim_{n \rightarrow \infty} \Pr\left(|M - \EE{M}| \geq s\EE{M}\right) \leq \frac{1}{s^{2}k(k - 1)}~.
    \]
\end{theorem}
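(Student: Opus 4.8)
The plan is to apply Chebyshev's inequality and then pass to the limit $n \to \infty$, exploiting the asymptotic expansions of $\EE{M}$ and $\VAR{M}$ already established in Theorem~\ref{theorem:expected_value_of_m_rv} and Corollary~\ref{corollary:variance_of_m_random_variable}. Since all the delicate moment computations have been completed, the remaining argument is short.

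First I would invoke Chebyshev's inequality on the event $\{|M - \EE{M}| \geq s\EE{M}\}$. For $k \geq 2$ we have $\EE{M} > 0$, so the threshold $s\EE{M}$ is a genuine positive deviation, and Chebyshev gives
\[
\Pr\left(|M - \EE{M}| \geq s\EE{M}\right) \leq \frac{\VAR{M}}{s^{2}\left(\EE{M}\right)^{2}}~.
\]

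Next I would substitute the known expansions. Corollary~\ref{corollary:variance_of_m_random_variable} yields $\VAR{M} = k(k-1) + O\left(\frac{1}{n}\right)$, and Theorem~\ref{theorem:expected_value_of_m_rv} yields $\EE{M} = k(k-1) + O\left(\frac{1}{n}\right)$. The leading behaviour of the bound is therefore governed entirely by the matching constant terms: $\lim_{n\to\infty}\VAR{M} = k(k-1)$ and $\lim_{n\to\infty}\EE{M} = k(k-1)$. Because $k(k-1)$ is a fixed positive constant for $k \geq 2$, the denominator $\left(\EE{M}\right)^{2}$ converges to the nonzero limit $\left(k(k-1)\right)^{2}$ and stays bounded away from zero for all large $n$, so the limit may be taken term by term in the quotient. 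Passing to the limit then gives
\[
\lim_{n\to\infty}\frac{\VAR{M}}{s^{2}\left(\EE{M}\right)^{2}} = \frac{k(k-1)}{s^{2}\left(k(k-1)\right)^{2}} = \frac{1}{s^{2}k(k-1)}~,
\]
which is precisely the asserted bound.

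There is no real obstacle here: the substantive content is the exact first- and second-moment analysis that produces the coincident leading constants $\VAR{M} \to k(k-1)$ and $\EE{M} \to k(k-1)$, and that work is already done. The only point deserving a word of care is the legitimacy of interchanging the limit with the ratio, which is justified since the numerator and denominator both converge to finite limits with a nonzero denominator; no uniformity issues arise because $k$ and $s$ are held fixed while only $n$ grows.
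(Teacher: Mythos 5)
Your proof is correct and follows exactly the route the paper takes: its proof is the one-line observation that the claim follows from Chebyshev's inequality together with Theorem~\ref{theorem:expected_value_of_m_rv} and Corollary~\ref{corollary:variance_of_m_random_variable}, which is precisely your argument spelled out in full. Your added care about $\EE{M} > 0$ for $k \geq 2$ and the legitimacy of passing to the limit in the quotient is sound and, if anything, more explicit than the paper itself.
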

\begin{proof}
The proof follows from Corollary~\ref{corollary:variance_of_m_random_variable}, Theorem~\ref{theorem:expected_value_of_m_rv} and Chebyschev' inequality.
\end{proof}

\section{Algorithm for generating k-edge connected
graphs for arbitrarily chosen values of k}

We will now use generated random spanning trees to construct $k$-edge-connected graph. 
The following theorem provides us the condition, which determines $k$-edge-connectivity of the constructed graph.

\begin{theorem} \label{th:k-connected-condition}
If all of the joined random spanning trees $T_{1}, T_{2}, \ldots, T_{k}$ are pairwise edge disjoint, then a $k$-splicer is $k$-edge-connected.
\end{theorem}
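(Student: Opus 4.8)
The plan is to reduce the statement to the global edge version of Menger's theorem (Appendix~\ref{appendix:mengers_theorem_statement}), which asserts that a graph on at least two vertices is $k$-edge-connected if and only if every pair of its vertices is joined by $k$ pairwise edge-disjoint paths. Thus it suffices to exhibit, for an arbitrary pair $u, v \in [n]$, a family of $k$ pairwise edge-disjoint $u$--$v$ paths inside the $k$-splicer $G$, whose edge set is $\bigcup_{i=1}^{k} E(T_i)$.

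First I would invoke the elementary fact that in any tree there is exactly one path between any two vertices. Applying this to each spanning tree $T_i$, I obtain for the fixed pair $u, v$ a uniquely determined path $P_i$ joining $u$ and $v$ that uses only edges of $T_i$. Since the edge set of $G$ contains the edge set of every $T_i$, each $P_i$ is also a $u$--$v$ path in $G$.

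Next I would argue edge-disjointness of the family $P_1, \dots, P_k$. By construction $P_i$ uses only edges of $T_i$; since the trees $T_1, \dots, T_k$ are assumed pairwise edge-disjoint, the paths $P_i$ and $P_j$ share no edge whenever $i \neq j$. Here it is worth noting explicitly that the paths may well share internal vertices, but this is irrelevant for our purpose: edge-connectivity, unlike vertex-connectivity, only requires the witnessing paths to be pairwise edge-disjoint. Consequently $G$ carries $k$ pairwise edge-disjoint $u$--$v$ paths, and since $u, v$ were arbitrary, the Menger condition holds for every pair of vertices.

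The step I expect to require the most care is the invocation of Menger's theorem rather than any verification: I must make sure I am using the \emph{edge} (not vertex) global version, and that its hypotheses are met --- in particular that $G$ has at least two vertices (immediate, since $n \geq 2$) and that the direction supplied in the appendix is precisely ``$k$ edge-disjoint paths for every pair implies $k$-edge-connected.'' Everything else is structurally routine; the only conceptual content is the observation that pairwise edge-disjointness of the trees transfers verbatim to pairwise edge-disjointness of the extracted paths, so that the unique tree-paths serve directly as the Menger witnesses.
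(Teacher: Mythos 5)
Your proposal is correct and follows essentially the same route as the paper's own proof: extract the unique $u$--$v$ path from each tree $T_i$, observe that pairwise edge-disjointness of the trees transfers directly to the paths, and conclude via the edge form of Menger's theorem. If anything, your write-up is slightly more careful than the paper's, since you explicitly note that shared internal vertices are harmless and that the global (all-pairs) edge version of Menger is the one being invoked.
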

\begin{proof}
Consider a pair of vertices $u, v \in V$. In each random spanning tree $T_{i}$ ($1 \leq i \leq k$) there is a unique path connecting the vertices $u$ and $v$. Since all the joined trees are disjoint, it follows that each tree $T_{i}$ adds one unique path between vertices $u$ and $v$. After joining $k$ spanning trees, there are $k$ edge-disjoint paths connecting the vertices $u$ and $v$. Thus, the result graph is $k$-edge-connected.
\end{proof}

\begin{algorithm}
\caption{Algorithm for generating $k$-edge-connected graphs.} \label{algorithm:k-edge-connected-generation}
\begin{algorithmic}[1]
\State $i \leftarrow 1$
\While{$i \leq k$} \label{algorithm:line:tree-generation-beg}
\State $T_{i} \leftarrow$ \text{random spanning tree chosen uniformly at random.}
\State $B_{T_{i}} \leftarrow G \cap T_{i}$ 
\State $G \leftarrow G \cup T_{i}$ \Comment{We need to know which edges are already in $G$.}
\State{$i \leftarrow i + 1$}
\EndWhile \\ \label{algorithm:line:tree-generation-end}

\For{$i \in \{2, 3, \ldots, k\}$} \label{algorithm:line:disjoint-phase-beg}
\For{$e \in B_{T_{i}}$}
\State $T_{i} \leftarrow T_{i} \setminus \{e\}$
\State $e' \leftarrow \text{GetReplacementEdge}(G, T_{i}, e)$ \label{algorithm:line:replacing-procedure}
\State $T_{i} \leftarrow T_{i} \cup \{e'\}$
\EndFor
\EndFor \label{algorithm:line:disjoint-phase-end}
\\

\State $G = \bigcup_{i = 1}^{k} T_{i}$ 
\State \Return $G$
\end{algorithmic}
\end{algorithm}

From Theorem~\ref{theorem:expected_value_of_m_rv} it is known that the average number of edge repetitions is $k(k - 1)$ and, from Theorem~\ref{theorem:rv_m_concentration_coefficient}, this number is well concentrated around its mean for $k \geq 3$. Thus, it is reasonable to replace each repeating edge in a tree, by the other edge, which does not occur in any of the selected trees. If we manage to replace each repeating edge with the new one, we will obtain $k$ edge-disjoint spanning trees, which then can be used to construct a $k$-edge-connected graph by summing sets of their edges. \par
The method for generating $k$-edge-connected graphs is presented in Algorithm~\ref{algorithm:k-edge-connected-generation}. In the beginning, $k$ random spanning trees of $K_{n}$ are generated. During generation process, each edge which is already present in partially generated graph $G$ is added to set $B_{T_{i}}$ which is the set of repeating edges in a tree $T_{i}$. Once $k$ random spanning trees have been generated, the phase of making them pairwise edge disjoint follows. For each tree $T_{i}$ where $i \geq 2$ (tree $T_{1}$ can be left intact), each repeating edge $e \in B_{T_{i}}$ is considered and the selection of replacement edge $e'$ follows (lines ~\ref{algorithm:line:disjoint-phase-beg}-\ref{algorithm:line:disjoint-phase-end} in Algorithm~\ref{algorithm:k-edge-connected-generation}). \par 
The selection process is presented in Algorithm~\ref{algorithm:find-replacement}. Let us assume that $e = \{u, v\}$ where $u, v \in K_{n}$. First, edge $e$ is removed from the current tree $T_{i}$, leaving only two connected components $C_{1}$ and~$C_{2}$. Then, vertex $u'$ from component $C_{1}$ with the smallest number of neighbors in component~$C_{2}$ is selected -- this vertex will be one of the endpoints of the replacement edge $e'$ (lines~\ref{algorithm:line:computing-neighbors-beg}-\ref{algorithm:line:computing-neighbors-end} in Algorithm~\ref{algorithm:find-replacement}). Next, possible second endpoints of edge $e'$ are determined by computing vertices from~$C_{2}$, which are not neighbors of endpoint $u' \in C_{1}$ (line~\ref{algorithm:line:computing-non-neighbor} in Algorithm~\ref{algorithm:find-replacement}). If the set of such non-neighboring vertices is not empty, then second endpoint $v'$ is chosen arbitrarily and replacement edge $e'= \{u', v'\}$ is returned. Otherwise, if no endpoint could be found in component $C_{2}$, then the algorithm returns the edge, which was to be removed. \par

\begin{algorithm}
\caption{Algorithm for finding replacement edges.}
\label{algorithm:find-replacement}
\begin{algorithmic}[1]
\Function{GetReplacementEdge}{G, T, e}
    \State $(C_1, C_2) \leftarrow \text{ConnectedComponents}(T)$ \\
    
    \State $\text{minDeg} \leftarrow \infty$
    \State $v^* \leftarrow \text{null}$
    \State $\text{neighbors} \leftarrow \emptyset$
    
    \Comment{Find vertex in $C_{1}$ with minimum number of neighbors in $C_{2}$.}
    \For{$v \in C_{1}$} \label{algorithm:line:computing-neighbors-beg}
        \State $N_v \leftarrow \{u \in C_{2} \mid \{v, u\} \in G\}$
        \If{$|N_v| < \text{minDeg}$}
            \State $\text{minDeg} \leftarrow |N_v|$
            \State $v^* \leftarrow v$
            \State $\text{neighbors} \leftarrow N_v$
        \EndIf
    \EndFor \label{algorithm:line:computing-neighbors-end} \\
    
    \State $\text{nonNeighbors} \leftarrow C_{2} \setminus \text{neighbors}$ \label{algorithm:line:computing-non-neighbor}
    
    \If{$\text{nonNeighbors} = \emptyset$}
        \Return $e$ \Comment{No replacement edge found - returning old edge.}
    \Else
        \State $w \leftarrow \text{arbitrarily chosen vertex from nonNeighbors.}$ 
        \State \Return $(v^*, w)$
    \EndIf
\EndFunction
\end{algorithmic}
\end{algorithm}

\begin{figure}[h!]
    \centering
    \begin{subfigure}[b]{0.27\textwidth}
        \centering
        \includegraphics[width=\textwidth]{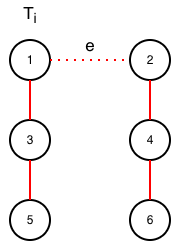}
        \caption{}
        \label{fig:generated_tree}
    \end{subfigure}
    \hfill
    \begin{subfigure}[b]{0.3\textwidth}
        \centering
        \includegraphics[width=\textwidth]{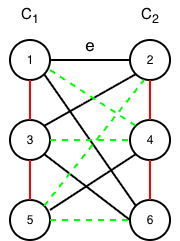}
        \caption{}
        \label{fig:possible_edges}
    \end{subfigure}
    \hfill
    \begin{subfigure}[b]{0.3\textwidth}
        \centering
        \includegraphics[width=\textwidth]{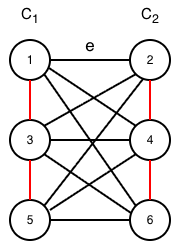} 
        \caption{}
        \label{fig:bipartite_complete}
    \end{subfigure}
    \caption{Finding a replacement for repeating edge $e$.}
    \label{fig:edge-replacement-process}
\end{figure}

The process of finding replacement edge is depicted in Figure~\ref{fig:edge-replacement-process}. Let us assume, that repeating edge $e$ from spanning tree $T_{i}$ $(2 \leq i \leq k)$ (Figure~\ref{fig:generated_tree}) is being processed. The possible new edges are presented in Figure~\ref{fig:possible_edges} with dashed, green lines. The vertex $5$ would be selected by Algorithm~\ref{algorithm:find-replacement} as the first endpoint of replacement edge $e'$. The second endpoint of $e'$ would be either vertices $2$ or $6$ -- in the case of vertices from component $C_{2}$, the choice is arbitrary. The case, when no new edge could be selected is shown in Figure~\ref{fig:bipartite_complete}. 
In this scenario, each vertex from component $C_{1}$ is connected with all vertices from component $C_{2}$, thus forming complete bipartite graph, and no new edge between these two components can be formed. Indeed, in such situation, we can simply return repeating edge $e$, because these two components are fully connected. \par

The time complexity of Algorithm~\ref{algorithm:k-edge-connected-generation} comprises two parts -- the part for generating $k$ random spanning trees of $K_{n}$ (lines~\ref{algorithm:line:tree-generation-beg}-\ref{algorithm:line:tree-generation-end}) and the part for replacing repeating edges with the new ones (line~\ref{algorithm:line:replacing-procedure}). The time complexity of the first part depends on the method of selecting RSTs with uniform distribution. Since the methods proposed in~\cite{Broder1989, Aldous1990, Wilson1996} are used in this paper, the time complexity of this part is $O(kn\log{n})$. It remains an open question whether the time complexity for the generation of random spanning trees can be improved to $O(kn)$, by generating Pr\"ufer sequences uniformly at random, as suggested in Section~\ref{section:expected_degree_of_fixed_vertex}. The replacement for each repeating edge can be found in $O(k(n - 1))$. Since, there are on average $k(k - 1)$ edge repetitions, the time complexity of this step is $O(k^{2}(k - 1)(n - 1))$. Thus, the overall time complexity of the algorithm is $O(kn\log{n} + k^{2}(k - 1)(n - 1))$. \par 
In our setting, $k$ is the parameter determined at a beginning and is constant. The higher value of $k$ is, the more edges we need in the network, so networks with large $k$, especially when $k$ is close to or equal $n$, may be expensive to construct. Therefore, we assume that $k \ll n $. On the other hand $n$ may be very large and it may change. \par

\section{Approximation Result}
\label{section:finish}
The designed algorithm solves instances of the Survivable Network Design Problem (SNDP) with cost $c_{e} = 1$ for each edge $e \in E$ and connectivity requirement $k$ for each pair of vertices $u, v \in V$ and returns the solution, which is within a factor strictly less than $2$ of optimum.

\begin{theorem}
The Algorithm~\ref{algorithm:k-edge-connected-generation} solves instances of the Survivable Network Design Problem with cost $1$ for each edge $e \in E$ and connectivity requirement $k$ for each pair of vertices $u, v \in V$ and returns the solution, which is within a factor strictly less than $2$ of optimum.
\end{theorem}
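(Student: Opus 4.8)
The plan is to sandwich the number of edges returned by Algorithm~\ref{algorithm:k-edge-connected-generation} between a lower bound on the optimum and an upper bound on the algorithm's output, and then compute the resulting ratio. Because every edge carries unit cost, the cost of any solution is exactly its number of edges, so the whole argument is a counting argument. Write $\mathrm{ALG}$ for the number of edges in the graph produced by the algorithm and $\mathrm{OPT}$ for the optimal cost.

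First I would bound $\mathrm{OPT}$ from below. Any feasible solution $H$ joins every pair of vertices by at least $k$ edge-disjoint paths, so by the global version of Menger's theorem $H$ is $k$-edge-connected. In a $k$-edge-connected graph every vertex $v$ satisfies $d(v) \geq k$, since otherwise the fewer than $k$ edges incident to $v$ would constitute a cut isolating $v$. Summing over all vertices gives $2\,|E(H)| = \sum_{v} d(v) \geq kn$, hence $|E(H)| \geq kn/2$, and therefore $\mathrm{OPT} \geq kn/2$.

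Next I would bound $\mathrm{ALG}$ from above. The returned graph is the union of $k$ spanning trees, each having exactly $n-1$ edges, and a union can only identify edges rather than create new ones, so $\mathrm{ALG} \leq k(n-1)$. Combining the two estimates yields
\[
\frac{\mathrm{ALG}}{\mathrm{OPT}} \leq \frac{k(n-1)}{kn/2} = \frac{2(n-1)}{n} = 2 - \frac{2}{n} < 2,
\]
which is the asserted factor strictly below $2$ (and approaching $2$ as $n \to \infty$). This calculation is routine; the substance of the theorem lies elsewhere.

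The delicate point, and the one I expect to be the main obstacle, is \emph{feasibility}: one must certify that the returned graph is genuinely $k$-edge-connected, so that it qualifies as an SNDP solution at all. When every repeated edge is replaced successfully, the modified trees $T_1, \ldots, T_k$ are pairwise edge-disjoint and Theorem~\ref{th:k-connected-condition} gives $k$-edge-connectivity immediately. The difficulty is the branch in which Algorithm~\ref{algorithm:find-replacement} fails and returns the original edge $e = \{u,v\}$: this occurs exactly when the two components $C_1, C_2$ of $T_i \setminus \{e\}$ are already completely joined in $G$, so the cut $(C_1, C_2)$ already carries $|C_1|\cdot|C_2| \geq n-1 \geq k$ edges. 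I would argue that in this situation the edge need not be made disjoint, because the saturation of that cut supplies the required alternative paths locally; the crux is to show that such local saturation can never pull the global edge-connectivity below $k$, which is what remains to be verified rigorously.
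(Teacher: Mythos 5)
Your approximation-ratio argument is, step for step, the paper's own proof: the paper lower-bounds $OPT \geq kn/2$ from the fact that any feasible solution is $k$-edge-connected (hence has minimum degree at least $k$), upper-bounds the output by $k(n-1)$ because it is a union of $k$ spanning trees, and concludes $S/OPT \leq 2(n-1)/n < 2$. Your version is marginally more explicit (you spell out the Menger/degree-sum reasoning behind the lower bound), but it is the same decomposition with the same two estimates.

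The feasibility issue you flag is genuine, and you should know that the paper's proof of this theorem does not address it either: it tacitly assumes the returned graph is $k$-edge-connected, which via Theorem~\ref{th:k-connected-condition} requires the modified trees to be pairwise edge-disjoint. The only treatment of the failure branch of Algorithm~\ref{algorithm:find-replacement} is the informal remark in the surrounding text that when no replacement exists, the components $C_1$ and $C_2$ form a complete bipartite graph in $G$ and so the repeated edge may be kept --- which is precisely the saturated-cut observation in your sketch, and, as you correctly note, it certifies only that single cut of size $|C_1|\cdot|C_2| \geq n-1 \geq k$, not the global edge-connectivity of the union. So your proposal does not have a gap relative to the paper; rather, you have identified a gap the paper itself leaves open. (A further wrinkle in the same direction: in Algorithm~\ref{algorithm:k-edge-connected-generation} the graph $G$ is not updated when a replacement edge $e'$ is inserted into $T_i$, so two different trees can select the same replacement edge, giving another way pairwise disjointness can fail; any rigorous feasibility proof would have to repair or rule out this case as well.)
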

\begin{proof}
Let $OPT$ denote the number of edges in the optimal solution of the SNDP problem. Since the solution has to be a $k$-edge-connected graph then
\[
    OPT \geq \frac{kn}{2}~,
\]
where $n$ is the number of vertices. Let $S$ denote the number of edges in the solution found by the algorithm. The number of edges $S$ in a $k$-splicer is at most $k(n-1)$. Then
\[
    \frac{S}{OPT} \leq \frac{k(n - 1)}{\frac{kn}{2}} = \frac{2k(n - 1)}{kn} = \frac{2(n - 1)}{n} < 2~.
\]
\end{proof}

\begin{acks}

\end{acks}

\bibliographystyle{ACM-Reference-Format}
\bibliography{rst_bibliography} 

\appendix

\section{Menger's Theorem statement}
\label{appendix:mengers_theorem_statement}
In this section we provide a statement of Menger's Theorem from~\cite{Diestel2017}. Given graph $G = (V, E)$ and sets $A, B \subset V$, we call $P = x_{0}\ldots x_{k}$ an $A-B$ path if $V(P) \cap A = \{x_{0}\}$ and $V(P) \cap B = \{x_{k}\}$, where $x_{0} \in A$ and $x_{k} \in B$.
\par
\textbf{Menger's Theorem (Edge Form):} For any two distinct vertices $s$ and $t$ in a graph $G$, the maximum number of edge-disjoint $s-t$ paths is equal to the minimum number of edges whose deletion disconnects $s$ from $t$.

\end{document}